\documentclass[11pt]{article}
\usepackage{makeidx}
\usepackage{euscript}
\usepackage{amsmath,amssymb,amsfonts}
\usepackage{dsfont}
\usepackage{latexsym}
\usepackage{graphicx}
\usepackage{fancybox}
\usepackage{fullpage}
\usepackage[backref]{hyperref}
\usepackage{paralist}
\usepackage{color}
\usepackage{xcolor}
\usepackage{wrapfig}
\usepackage{tikz}
\usepackage{setspace}
\usepackage{algorithm}
\usepackage[noend]{algpseudocode}
\usepackage[framemethod=tikz]{mdframed}
\usepackage{xspace}
\usepackage{pgfplots}
\usepackage{float}
\pgfplotsset{compat=1.14}

\newenvironment{proof}{\noindent{\bf Proof : \ }}{\hfill$\Box$\par\medskip}

\newtheorem{theorem}{Theorem}

\newtheorem{lemma}[theorem]{Lemma}

\newtheorem{definition}[theorem]{Definition}
\newtheorem{problem}{Problem}

\newtheorem{fact}[theorem]{Fact}

\newenvironment{proofof}[1]{\begin{trivlist} \item {\bf Proof
#1:~~}}
  {\qed\end{trivlist}}
\renewenvironment{proofof}[1]{\par\medskip\noindent{\bf Proof of #1: \ }}{\hfill$\Box$\par\medskip}
\newcommand{\namedref}[2]{\hyperref[#2]{#1~\ref*{#2}}}
\newcommand{\thmlab}[1]{\label{thm:#1}}
\newcommand{\thmref}[1]{\namedref{Theorem}{thm:#1}}

\newcommand{\lemlab}[1]{\label{lem:#1}}
\newcommand{\lemref}[1]{\namedref{Lemma}{lem:#1}}

\newcommand{\seclab}[1]{\label{sec:#1}}
\newcommand{\secref}[1]{\namedref{Section}{sec:#1}}

\newcommand{\figlab}[1]{\label{fig:#1}}
\newcommand{\figref}[1]{\namedref{Figure}{fig:#1}}
\newcommand{\alglab}[1]{\label{alg:#1}}
\renewcommand{\algref}[1]{\namedref{Algorithm}{alg:#1}}
\newcommand{\tablelab}[1]{\label{tab:#1}}
\newcommand{\tableref}[1]{\namedref{Table}{tab:#1}}
\newcommand{\deflab}[1]{\label{def:#1}}

\newcommand{\stepref}[1]{\namedref{Step}{step:#1}}
\newcommand{\steplab}[1]{\label{step:#1}}

\newcommand{\probref}[1]{\namedref{Problem}{prob:#1}}
\newcommand{\problab}[1]{\label{prob:#1}}



\newcommand{\COMMENTED}[1]{{}}

\newcommand{\PPr}[1]{\ensuremath{\mathbf{Pr}\left[#1\right]}}
\newcommand{\Ex}[1]{\ensuremath{\mathbf{E}[#1]}}

\newcommand{\eps}{\epsilon}
\newcommand{\estimator}{\ensuremath{\mathsf{Estimator}}}
\newcommand{\oracle}{\ensuremath{\mathsf{Oracle}}}

\newcommand{\bptree}{\ensuremath{\mathsf{BPTree}}}
\newcommand{\countsketch}{\ensuremath{\mathsf{CountSketch}}}
\newcommand{\countmin}{\ensuremath{\mathsf{CountMin}}}
\newcommand{\histogram}{composable histogram}

\newcommand{\smoothcounter}{\ensuremath{\mathsf{SmoothCounter}}}

\newcommand{\HAM}[1]{\ensuremath{\mathsf{HAM}\left(#1\right)}}

\newcommand{\mdef}[1]{{\ensuremath{#1}}\xspace}  
\newcommand{\myfunc}[1]{\mdef{\mathsf{#1}}}      

\DeclareMathOperator*{\polylog}{polylog}

\newcommand{\superscript}[1]{\ensuremath{^{\mbox{\tiny{\textit{#1}}}}}\xspace}
\def \th {\superscript{th}}     
\def \etal{\,{\it et~al.}\,}

\def \polylog  {\mdef{\myfunc{polylog}}}             
\renewcommand{\O}[1]{\ensuremath{\mathcal{O}\left(#1\right)}}						
\newcommand{\T}[1]{\ensuremath{\Theta\left(#1\right)}}								
\newcommand{\D}{\Delta}
\renewcommand{\d}{\delta}

\newcommand{\ig}{\textsf{IndexGreater}}
\newcommand{\gapham}{\textsf{GapHamming}}
\newcommand{\augind}{\textsf{AugmentedIndex}}
\def \lsb        {\mdef{\mathsf{lsb}}}                   

\newcommand{\ignore}[1]{}

\newif\ifnotes\notestrue 
\ifnotes
\newcommand{\elena}[1]{\textcolor{blue}{{\bf (Elena:} {#1}{\bf ) }} \marginpar{\tiny\bf
             \begin{minipage}[t]{0.5in}
               \raggedright E:
                \end{minipage}}}
\newcommand{\harry}[1]{\textcolor{green}{{\bf (Harry:} {#1}{\bf ) }} \marginpar{\tiny\bf
             \begin{minipage}[t]{0.5in}
               \raggedright H:
                \end{minipage}}}
\newcommand{\samson}[1]{\textcolor{purple}{{\bf (Samson:} {#1}{\bf ) }} \marginpar{\tiny\bf
             \begin{minipage}[t]{0.5in}
               \raggedright S:
            \end{minipage}}}
\newcommand{\vova}[1]{\textcolor{red}{{\bf (Vova:} {#1}{\bf ) }} \marginpar{\tiny\bf
             \begin{minipage}[t]{0.5in}
               \raggedright V:
                \end{minipage}}}														
\else
\newcommand{\elena}[1]{}
\newcommand{\harry}[1]{}
\newcommand{\samson}[1]{}
\newcommand{\vova}[1]{}
\fi

\definecolor{mahogany}{rgb}{0.75, 0.25, 0.0}
\definecolor{darkblue}{rgb}{0.0, 0.0, 0.55}
\definecolor{darkpastelgreen}{rgb}{0.01, 0.75, 0.24}
\definecolor{darkgreen}{rgb}{0.0, 0.2, 0.13}
\definecolor{darkgoldenrod}{rgb}{0.72, 0.53, 0.04}
\definecolor{darkred}{rgb}{0.55, 0.0, 0.0}
\definecolor{forestgreen}{rgb}{0.13, 0.55, 0.13}
\hypersetup{
     colorlinks   = true,
     citecolor    = mahogany,
		 linkcolor		= forestgreen
}

\makeatletter
\renewcommand*{\@fnsymbol}[1]{\textcolor{mahogany}{\ensuremath{\ifcase#1\or *\or \dagger\or \ddagger\or
 \mathsection\or \triangledown\or \mathparagraph\or \|\or **\or \dagger\dagger
   \or \ddagger\ddagger \else\@ctrerr\fi}}}
\makeatother

\providecommand{\email}[1]{\href{mailto:#1}{\nolinkurl{#1}\xspace}}

\title{Nearly Optimal Distinct Elements and Heavy Hitters on Sliding Windows}
\author{
Vladimir Braverman\thanks{Department of Computer Science, Johns Hopkins University, Baltimore, MD. 
This material is based upon work supported in part by the National Science Foundation under Grants No. 1447639, 1650041, and 1652257, Cisco faculty award, and by the ONR Award N00014-18-1-2364.
E-mail: \email{vova@cs.jhu.edu}}
\\
Johns Hopkins University
\and
Elena Grigorescu\thanks{Department of Computer Science, Purdue University, West Lafayette, IN. 
Research supported in part by NSF CCF-1649515. 
E-mail: \email{elena-g@purdue.edu}}
\\
Purdue University
\and
Harry Lang\thanks{Department of Mathematics, Johns Hopkins University, Baltimore, MD. 
This material is based upon work supported by the Franco-American Fulbright Commission. 
The author thanks INRIA (l'Institut national de recherche en informatique et en automatique) for hosting him during the writing of this paper.
E-mail: \email{hlang8@jhu.edu}}
\\
Johns Hopkins University
\and
David P. Woodruff\thanks{School of Computer Science, Carnegie Mellon University, Pittsburgh, PA. 
The author would like to acknowledge the support by the National Science Foundation under Grant No. CCF-1815840. 
E-mail: \email{dwoodruf@cs.cmu.edu}}
\\
Carnegie Mellon University
\and
Samson Zhou\thanks{Department of Computer Science, Purdue University, West Lafayette, IN. 
Research supported in part by NSF CCF-1649515. 
E-mail: \email{samsonzhou@gmail.com}}\\
Purdue University
}
\date{}

\begin{document}
\begin{titlepage}
\maketitle
\thispagestyle{empty}
\begin{abstract}
We study the \emph{distinct elements} and \emph{$\ell_p$-heavy hitters} problems in the \emph{sliding window} model, where only the most recent $n$ elements in the data stream form the underlying set. 
We first introduce the \emph{\histogram}, a simple twist on the exponential (Datar \emph{et al.}, SODA 2002) and smooth histograms (Braverman and Ostrovsky, FOCS 2007) that may be of independent interest. 
We then show that the \histogram{} along with a careful combination of existing techniques to track either the identity or frequency of a few specific items suffices to obtain algorithms for both distinct elements and $\ell_p$-heavy hitters that are nearly optimal in both $n$ and $\eps$. 

Applying our new \histogram{} framework, we provide an algorithm that outputs a $(1+\eps)$-approximation to the number of distinct elements in the sliding window model and uses $\O{\frac{1}{\eps^2}\log n\log\frac{1}{\eps}\log\log n+\frac{1}{\eps}\log^2 n}$ bits of space. 
For $\ell_p$-heavy hitters, we provide an algorithm using space $\O{\frac{1}{\eps^p}\log^3 n\left(\log\log n+\log\frac{1}{\eps}\right)}$ for $0<p\le 2$, improving upon the best-known algorithm for $\ell_2$-heavy hitters (Braverman \emph{et al.}, COCOON 2014), which has space complexity $\O{\frac{1}{\eps^4}\log^3 n}$. 
We also show lower bounds of $\Omega\left(\frac{1}{\eps}\log^2 n+\frac{1}{\eps^2}\log n\right)$ for distinct elements and $\Omega\left(\frac{1}{\eps^p}\log^2 n\right)$ for $\ell_p$-heavy hitters, tight up to $\O{\log\frac{1}{\eps}}$ factors for the latter, and tight up to both $\O{\log\log n}$ and $\O{\log\frac{1}{\eps}}$ factors for the former. 
\end{abstract}
\end{titlepage}
\newcommand{\figflow}{
\begin{figure*}[htb]
\centering
\begin{tikzpicture}[scale=2]

\draw[->] (0.2,0)--(0.7,0);
\draw[->] (0,-0.2)--(0,-0.8);
\draw[->] (1,-0.2)--(1,-0.8);

\node at (0,0){$A$};
\node at (1,0){$\D(A)$};
\node at (1,-1){$\hat{f}(\D(A))$};
\node at (0,-1){$f(A)$};
\node at (0.4,-1){$\approx$};
\node at (0,-0.5)[left]{$f$};
\node at (0.5,0)[above]{$\D$};
\node at (1,-0.5)[right]{$\hat{f}$};
\end{tikzpicture}
\caption{The left edge shows direct computation of $f(A)$ from the intput $A$.  Alternatively, the top edge sketches the input to $\D(A)$, which carries enough information to approximate $f(A)$ using the function $\hat{f}$.}\figlab{fig:commutative}
\end{figure*}
}

\newcommand{\figwasteful}{
\begin{figure*}[htb]
\centering
\begin{tikzpicture}[scale=0.4]
\node at (-6.5cm,0.5cm){\tiny{$p_{i-n-2}$}};
\node at (-3.5cm,0.5cm){\tiny{$p_{i-n-1}$}};
\node at (-3.5cm,-0.5cm){\tiny{$p_{i-n-1}$}};
\node at (-0.5cm,0.5cm){\tiny{$p_{i-n}$}};
\node at (-0.5cm,-0.5cm){\tiny{$p_{i-n}$}};
\node at (2.5cm,0.5cm){\tiny{$\ldots$}};
\node at (2.5cm,-0.5cm){\tiny{$\ldots$}};

\draw (-8cm,-0cm) rectangle+(27cm,1cm);
\filldraw[shading=radial,inner color=white, outer color=gray!90, opacity=0.2] (-8cm,0cm) rectangle+(27cm,1cm);
\draw (-5cm,-1cm) rectangle+(24cm,1cm);
\draw (-2cm,-2cm) rectangle+(21cm,1cm);
\filldraw[thick, top color=white,bottom color=red!50!] (-2cm,-2cm) rectangle+(21cm,1cm);
\draw (1cm,-3cm) rectangle+(18cm,1cm);
\filldraw[shading=radial,inner color=white, outer color=gray!90, opacity=0.2] (1cm,-3cm) rectangle+(18cm,1cm);
\draw (4cm,-4cm) rectangle+(15cm,1cm);
\draw (7cm,-5cm) rectangle+(12cm,1cm);
\draw (10cm,-6cm) rectangle+(9cm,1cm);
\filldraw[shading=radial,inner color=white, outer color=gray!90, opacity=0.2] (10cm,-6cm) rectangle+(9cm,1cm);
\draw (13cm,-7cm) rectangle+(6cm,1cm);
\filldraw[shading=radial,inner color=white, outer color=gray!90, opacity=0.2] (13cm,-7cm) rectangle+(6cm,1cm);

\draw[dashed] (-5cm,1cm)--(-5cm,0cm);
\draw[dashed] (-2cm,1cm)--(-2cm,-1cm);
\draw[dashed] (1cm,1cm)--(1cm,-2cm);
\draw[dashed] (16cm,1cm)--(16cm,-7cm);

\draw[dashed] (-2cm,-2cm)--(-2cm,-6.5cm);
\node at(-2cm,-7cm){\tiny{Sliding window begins}};
\node at (-0.5cm,-1.5cm){\tiny{$p_{i-n}$}};
\node at (2.5cm,-1.5cm){\tiny{$\ldots$}};
\foreach \y in {0.5,-0.5,...,-6.5}{
\node at (14.5cm,\y){\tiny{$\ldots$}};
\node at (17.5cm,\y){\tiny{$p_i$}};
}
\end{tikzpicture}
\caption{Each horizontal bar represents an instance of the insertion-only algorithm. The red instance represents the sliding window. Storing an instance beginning at each possible start point would ensure that the exact window is always available, but this requires linear space.
To achieve polylogarithmic space, the histogram stores a strategically chosen set of $\O{\log n}$ instances (shaded grey) so that the value of $f$ on any window can be $(1 + \eps)$-approximated by its value on an adjacent window.}\figlab{pic:Wasteful}
\end{figure*}
}

\newcommand{\figlb}{
\begin{figure*}[htb]
\centering
\begin{tikzpicture}[scale=0.6]
\draw[<->] (-10,4.7) -- (10,4.7);
\draw[decorate,decoration={brace,mirror}](9.8,5) -- (-9.8,5);
\node at (0,5.6){Sliding window string $S$ of length $n$};
\draw (-10,3.6) -- (-10,3.2) -- (-5,3.2) -- (-5,3.6);
\node at (-7.5,4){Block length: $\frac{6\eps n}{\log n}$};
\draw (-5,3.6) -- (-5,3.2) -- (0,3.2) -- (0,3.6);
\node at (-2.5,4){$\frac{6\eps n}{\log n}$};
\draw (-0,3.6) -- (0,3.2) -- (5,3.2) -- (5,3.6);
\node at (2.5,4){$\frac{6\eps n}{\log n}$};
\draw (5,3.6) -- (5,3.2) -- (10,3.2) -- (10,3.6);
\node at (7.5,4){$\frac{6\eps n}{\log n}$};

\foreach \x in {5,5.4,...,6.8}{
	\draw (\x,2.6) -- (\x,2.2) -- (\x+0.4,2.2) -- (\x+0.4,2.6);
}
\foreach \x in {0,0.6,...,3}{
	\draw (\x,2.6) -- (\x,2.2) -- (\x+0.6,2.2) -- (\x+0.6,2.6);
}
\foreach \x in {-5,-4.2,...,-1.4}{
	\draw (\x,2.6) -- (\x,2.2) -- (\x+0.8,2.2) -- (\x+0.8,2.6);
}
\foreach \x in {-6,-7,...,-10}{
	\draw (\x,2.6) -- (\x,2.2) -- (\x+1,2.2) -- (\x+1,2.6);
}

\draw[->] (-3.7,1.6) -- (-3.7,2);
\node at (-2.9,1){Elements $\{0,1,\ldots,(1+2\eps)^i-1\}$ inserted into piece $x_i$ of block $i$.}; 

\node at (-3.9,0){Alice: $x_1\ldots x_m$, where $m=\frac{1}{6\eps}\log n$.};
\draw[decorate,decoration={brace,mirror}](-6.3,-0.6) -- (-5.9,-0.6);
\node at (-4.7,-1.2){Each $x_k$ is $\frac{1}{2}\log n$ bits.};
\end{tikzpicture}
\caption{Construction of distinct elements instance by Alice. Pieces of block $i$ have length $(1+2\eps)^i-1$.}
\figlab{fig:lb}
\end{figure*}
}
\section{Introduction}
The streaming model has emerged as a popular computational model to describe large data sets that arrive sequentially.
In the streaming model, each element of the input arrives one-by-one and algorithms can only access each element once.
This implies that any element that is not explicitly stored by the algorithm is lost forever.
While the streaming model is broadly useful, it does not fully capture the situation in domains where data is time-sensitive such as network monitoring~\cite{C13,CG08,CM05x} and event detection in social media~\cite{OsborneEtAl2014}.
In these domains, elements of the stream appearing more recently are considered more relevant than older elements.
The \textit{sliding window model} was developed to capture this situation~\cite{DatarGIM02}.
In this model, the goal is to maintain computation on only the most recent $n$ elements of the stream, rather than on the stream in its entirety.
We call the most recent $n$ elements \emph{active} and the remaining elements \emph{expired}. 
Any query is performed over the set of active items (referred to as the current window) while ignoring all expired elements.

The problem of identifying the number of distinct elements, is one of the foundational problems in the streaming model.
\begin{problem}[Distinct elements]
\problab{prob:de}
Given an input $S$ of elements in $[m]$, output the number of items $i$ whose frequency $f_i$ satisfies $f_i>0$.
\end{problem}

\noindent
The objective of identifying \emph{heavy hitters}, also known as frequent items, is also one of the most well-studied and fundamental problems. 

\begin{problem}[$\ell_p$-heavy hitters]
\problab{prob:hh}
Given parameters $0<\phi<\eps<1$ and an input $S$ of elements in $[m]$, output all items $i$ whose frequency $f_i$ satisfies $f_i \ge \eps (F_p)^{1/p}$ and no item $i$ for which $f_i\le(\eps-\phi)(F_p)^{1/p}$, where $F_p=\sum_{i \in [m]} f_i^p$. (The parameter $\phi$ is typically assumed to be at least $c\eps$ for some fixed constant $0<c<1$.)
\end{problem}

\noindent
In this paper, we study the distinct elements and heavy hitters problems in the sliding window model. 
We show almost tight results for both problems, using several clean tweaks to existing algorithms. 
In particular, we introduce the \histogram{}, a modification to the exponential histogram~\cite{DatarGIM02} and smooth histogram~\cite{BravermanO07}, that may be of independent interest. 
We detail our results and techniques in the following section, but defer complete proofs to the full version of the paper \cite{BravermanGLWZ18}.

\subsection{Our Contributions} 
\subsubsection*{Distinct elements.}
An algorithm storing $\O{\frac{1}{\eps^2}\log n\log\frac{1}{\delta}(\log\frac{1}{\eps}+\log\log n)}$ bits in the insertion-only model was previously provided~\cite{KaneNW10}. 
Plugging the algorithm into the smooth histogram framework of~\cite{BravermanO07} yields a space complexity of $\O{\frac{1}{\eps^3}\log^3 n(\log\frac{1}{\eps}+\log\log n)}$ bits. 
We improve this significantly as detailed in the following theorem. 
\begin{theorem}
\thmlab{thm:sliding:de}
Given $\eps>0$, there exists an algorithm that, with probability at least $\frac{2}{3}$, provides a $(1+\eps)$-approximation to the number of distinct elements in the sliding window model, using $\O{\frac{1}{\eps^2}\log n\log\frac{1}{\eps}\log\log n+\frac{1}{\eps}\log^2 n}$ bits of space.
\end{theorem}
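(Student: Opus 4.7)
\medskip
\noindent\textbf{Proof approach for \thmref{thm:sliding:de}.}
The plan is to apply the composable histogram framework together with the insertion-only sketch of Kane, Nelson, and Woodruff (henceforth KNW) in such a way that only \emph{one} full $(1+\eps)$-precision KNW instance is ever maintained, rather than one per bucket as in the naive smooth-histogram plug-in. The naive plug-in pays a factor of $\frac{1}{\eps}\log n$ on top of the insertion-only cost because every one of its $\Theta(\frac{1}{\eps}\log n)$ buckets must itself give a $(1+\eps)$-approximation to $F_0$; the composable histogram is designed precisely to weaken the per-bucket precision requirement to something much cheaper, with correctness restored via a small amount of shared fine-grained state.

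Concretely, I would instantiate the composable histogram over the sliding window with $\O{\frac{1}{\eps}\log n}$ buckets, storing in each bucket only a constant-factor estimator for $F_0$ on its suffix of the stream together with the identities of a small number of items near each bucket boundary. In parallel, I would maintain a single full-precision $(1+\eps)$-accuracy KNW instance whose support is aligned with the current window. When the window slides past a bucket boundary and that bucket expires, the identity information recorded by the composable histogram at the boundary lets the full-precision instance be ``shifted'' forward to the new left endpoint of the window without restarting, and queries are answered directly from this instance.

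The accuracy argument reduces to two claims: the composable histogram's bucket layout guarantees that $F_0$ on adjacent candidate windows differs by at most a multiplicative $(1+\eps)$ factor once the recorded boundary items are accounted for, and the single KNW instance delivers its standard $(1+\eps)$ guarantee on the window it represents. For the space bound, one KNW instance contributes $\O{\frac{1}{\eps^2}\log n\log\frac{1}{\eps}\log\log n}$ bits, while the identity-tracking overhead across all buckets contributes $\O{\frac{1}{\eps}\log n}$ identifiers of $\O{\log n}$ bits each; summing the two yields the bound claimed in the theorem.

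The main obstacle will be making the shift step work using only the state the composable histogram already provides. This requires exploiting the internal structure of KNW — in particular, the linearity of its subsampled levels so that removing the identified boundary items from the sketch yields a valid KNW sketch for the new window with the correct distribution — and showing that, across all $\poly(n)$ window shifts, the recorded boundary identifiers are both sufficient to correct the sketch and few enough to fit inside the $\O{\frac{1}{\eps}\log^2 n}$ budget while the overall success probability remains at least $\frac{2}{3}$.
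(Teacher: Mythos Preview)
Your plan has a genuine gap at the ``shift'' step, and it is not a detail that can be patched by exploiting more internal structure of KNW. A distinct-elements sketch is intrinsically non-invertible: once an item $x$ has set a bit (or incremented a bucket at some subsampling level), the sketch retains no information about \emph{where} in the stream $x$ occurred or whether $x$ occurs again later in the window. Consequently, when the left endpoint advances past some occurrences of $x$, you cannot decide from the sketch---nor from a short list of ``boundary identities''---whether $x$ should still be counted. To make your shift correct you would need, for each expired bucket, the set of items that appear in that bucket and \emph{nowhere} in the remaining window; computing or storing that set is itself a sliding-window distinct-elements problem and cannot fit in $\O{\frac{1}{\eps}\log^2 n}$ bits. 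Even restricting to items that survive KNW's subsampling does not help: at the relevant level there are $\Theta(\eps^{-2})$ sampled items per suffix, and across $\Theta(\frac{1}{\eps}\log n)$ bucket boundaries the identities alone already exceed your budget, before you even address the recurrence issue.

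The paper's argument avoids deletion entirely. It uses the (suboptimal for insertion-only) bit-table sketch of Bar-Yossef~\etal, whose state is a $\log m \times \O{\eps^{-2}}$ binary table, and observes that for nested suffixes $(t_1,t)\supset(t_2,t)\supset\cdots$ the corresponding tables are \emph{monotone}: every $1$-entry in $T_{i+1}$ is also a $1$-entry in $T_i$. Hence all $\O{\frac{1}{\eps}\log n}$ suffix sketches can be encoded in a single table whose $(a,b)$ entry stores the largest index $i$ for which $T_i$ has a $1$ at $(a,b)$; since each column of this combined table is in turn monotone in the row index, a column can be encoded in $\O{\log n\log\frac{1}{\eps}}$ bits. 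With $\O{\eps^{-2}}$ columns and $\O{\log\log n}$ repetitions this yields the first term, while the $\O{\frac{1}{\eps}\log n}$ starting timestamps give the second. The savings come from \emph{jointly encoding many simultaneous suffix sketches}, not from maintaining one sketch and sliding it.
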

A known lower bound is $\Omega\left(\frac{1}{\eps^2}+\log n\right)$ bits~\cite{AlonMS99,IndykW03} for insertion-only streams, which is also applicable to sliding windows since the model is strictly more difficult. 
We give a lower bound for distinct elements in the sliding window model, showing that our algorithm is nearly optimal, up to $\log\frac{1}{\eps}$ and $\log\log n$ factors, in both $n$ and $\eps$.
\newcommand{\thmdelb}
{Let $0<\eps\le\frac{1}{\sqrt{n}}$.
Any one-pass sliding window algorithm that returns a $(1+\eps)$-approximation to the number of distinct elements with probability $\frac{2}{3}$ requires $\Omega\left(\frac{1}{\eps}\log^2 n+\frac{1}{\eps^2}\log n\right)$ bits of space.}
\begin{theorem}
\thmlab{thm:de:lb}
\thmdelb
\end{theorem}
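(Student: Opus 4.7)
The plan is to establish the two additive terms $\Omega(\frac{1}{\eps^2}\log n)$ and $\Omega(\frac{1}{\eps}\log^2 n)$ separately via reductions from communication complexity, and then sum them. The first term is a fairly standard Augmented-Indexing amplification of the classical $\Omega(1/\eps^2)$ insertion-stream lower bound~\cite{AlonMS99,IndykW03}: Alice encodes $\Theta(\log n)$ independent copies of the hard $(1+\eps)$-approximation instance into sub-streams of geometrically-increasing length placed within the window, and Bob (given Alice's algorithm state and a partial input specifying which copy to query) appends elements that shift the sliding window boundary so that only the target copy remains informative, after which a single $(1+\eps)$-approximation solves the isolated hard instance. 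This yields the $\Omega(\log n \cdot \frac{1}{\eps^2})$ term.

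For the sliding-window-specific $\Omega(\frac{1}{\eps}\log^2 n)$ term, I would reduce from Augmented Indexing on an input of total size $\Theta(\frac{1}{\eps}\log^2 n)$ bits, using the construction sketched in Figure~\ref{fig:lb}. Alice partitions her $n$-length stream into $m=\Theta(\frac{1}{\eps}\log n)$ blocks; within block $i$ she subdivides into pieces of length $(1+2\eps)^i-1$ and inserts a ``fingerprint'' of $(1+2\eps)^i$ fresh distinct elements (from a block-specific universe, disjoint across blocks) into the piece indexed by $x_i$, filling the remaining positions with a common background symbol. To recover $x_i$, Bob---who holds $x_1,\ldots,x_{i-1}$ and Alice's algorithm state---appends controlled filler elements so that the sliding window boundary lands at a chosen offset inside block $i$; the $(1+\eps)$-approximation of the distinct count then reveals whether piece $x_i$ lies above or below the boundary, since the associated count change is precisely $(1+2\eps)^i$. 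By copying Alice's state and simulating the algorithm under several different extensions, Bob performs a binary search over the candidate piece positions and decodes $x_i$ in $O(\log n)$ probes.

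The main obstacle is the geometric calibration: for every index $i$, the additive jump of $(1+2\eps)^i$ has to be a detectable multiplicative change against the ambient distinct count contributed by the remaining blocks and Bob's fillers. The geometric scaling of piece lengths across blocks---together with the disjoint per-block element universes and Bob's carefully chosen fillers---ensures that this ratio is $\Theta(\eps)$ uniformly in $i$, which is exactly the granularity detectable by the $(1+\eps)$-approximation guarantee. The assumption $\eps \le 1/\sqrt{n}$ is what lets $m$ and the piece sizes fit consistently inside the window. Once this calibration is verified, the $\Omega(\frac{1}{\eps}\log^2 n)$ one-way communication lower bound for Augmented Indexing propagates through the reduction to yield the claimed space lower bound on any sliding-window algorithm, and summing with the first term completes the proof.
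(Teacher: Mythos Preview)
Your overall two-part decomposition matches the paper. For the $\Omega(\frac{1}{\eps^2}\log n)$ term, the paper explicitly embeds $\Theta(\log n)$ instances of \gapham{} at geometrically increasing scales inside the window and argues that a sliding-window $(1+\eps)$-approximation solves each of them; your ``Augmented-Indexing amplification of the classical hard instance'' is the same mechanism phrased more abstractly, so that part is essentially aligned.

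The $\Omega(\frac{1}{\eps}\log^2 n)$ term has a real gap. You reduce from Augmented Indexing and have Bob recover the full symbol $x_i\in[\sqrt{n}]$ by \emph{binary search}: copy Alice's state and probe the algorithm under $O(\log n)$ different extensions. The hypothesis, however, only guarantees correctness with probability $2/3$ on any \emph{single} fixed stream. Once Alice hands over the state the random coins are fixed, and your $O(\log n)$ probes are on $O(\log n)$ different streams sharing those coins; a union bound gives no useful guarantee that all probes are simultaneously correct, so Bob's binary search need not succeed with constant probability and the reduction breaks. Amplifying the algorithm first would cost you an extra $\log n$ factor in space and weaken the resulting lower bound correspondingly.

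The paper sidesteps this by introducing the intermediate problem \ig, in which Bob is additionally given a threshold $j$ and must only decide whether $x_i>j$. A single threshold comparison is exactly what one sliding-window query can resolve, so Bob needs just one probe and the $2/3$ guarantee carries through cleanly. The paper then shows \ig{} itself has one-way complexity $\Omega(mn)$ via a reduction from Augmented Indexing (Bob synthesizes the threshold $j$ from the known prefix bits). Your construction---blocks, pieces of length $(1+2\eps)^i$, a fingerprint placed at position $x_i$---is essentially the paper's and works once you replace ``binary search for $x_i$'' by ``single comparison $x_i>j$''. A minor stylistic difference: you use disjoint per-block universes while the paper reuses the nested sets $\{0,\ldots,(1+2\eps)^i-1\}$ across blocks; both calibrations yield a multiplicative jump of $1+\Theta(\eps)$ at the target piece, so this is cosmetic.
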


\subsubsection*{$\ell_p$-heavy hitters.} 
We first recall in \lemref{lem:tail} a condition that allows the reduction from the problem of finding the $\ell_p$-heavy hitters for $0<p\le 2$ to the problem of finding the $\ell_2$-heavy hitters.  
An algorithm of~\cite{BravermanCINWW17} allows us to maintain an estimate of $F_2$. 
However, observe in \probref{prob:hh} that an estimate for $F_2$ is only part of the problem. 
We must also identify which elements are heavy. 
First, we show how to use tools from~\cite{BravermanCIW16} to find a superset of the heavy hitters. 
This alone is not enough since we may return false-positives (elements such that $f_i < (\eps - \phi) \sqrt{F_2}$). 
By keeping a careful count of the elements (shown in \secref{sec:hh}), we are able to remove these false-positives and obtain the following result, where we have set $\phi = \frac{11}{12} \epsilon$:
\begin{theorem}
\thmlab{thm:sliding:lp}
Given $\eps>0$ and $0<p\le 2$, there exists an algorithm in the sliding window model that, with probability at least $\frac{2}{3}$, outputs all indices $i\in[m]$ for which $f_i\ge\eps F_p^{1/p}$, and reports no indices $i\in[m]$ for which $f_i\le\frac{\eps}{12}F_p^{1/p}$. 
The algorithm has space complexity (in bits) $\O{\frac{1}{\eps^p}\log^3 n\left(\log\log n+\log\frac{1}{\eps}\right)}$.
\end{theorem}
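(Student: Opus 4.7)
The plan is to use the composable histogram framework as the outer shell and build an insertion-only $\ell_p$-heavy hitters primitive to plug into it. First I would invoke \lemref{lem:tail} to reduce the $\ell_p$ problem ($0<p\le 2$) to the $\ell_2$-heavy hitters problem on a (possibly reweighted) subsampled stream, so that it suffices to output items with $f_i \ge \eps\sqrt{F_2}$ and suppress items with $f_i \le \frac{\eps}{12}\sqrt{F_2}$. Running this reduction on top of the composable histogram structure yields $\O{\log n}$ suffix instances over carefully chosen checkpoints; the current window is approximated by one such instance together with the merging rules of the \histogram{}.

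For a single insertion-only instance, I would combine three parallel subroutines. First, the $F_2$-sketch of \cite{BravermanCINWW17} gives a $(1\pm\Theta(\eps))$-approximation to $F_2$ on the window using $\O{\frac{1}{\eps^2}\log n(\log\log n+\log\frac{1}{\eps})}$ bits per instance, which is needed to calibrate the heavy-hitter threshold $\eps\sqrt{F_2}$. Second, the \cite{BravermanCIW16} technique produces a superset $C$ of $\O{1/\eps^p}$ candidate heavy hitters; any true $\eps$-heavy hitter is guaranteed to be in $C$, but $C$ may contain false positives whose frequency is well below the threshold. Third, to eliminate false positives I would maintain, for each candidate $i\in C$, a sliding-window frequency estimate accurate to additive error $\frac{\eps}{24}\sqrt{F_2}$; setting the decision threshold at $\frac{11}{24}\eps\sqrt{F_2}$ (consistent with $\phi=\frac{11}{12}\eps$) separates true heavies from items below $\frac{\eps}{12}\sqrt{F_2}$.

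The main obstacle is step three: since the identity of the candidates is not fixed and only emerges as the stream evolves, we cannot pre-allocate a counter per item and we cannot directly wrap the smooth histogram around a per-item counter either. The plan is to track frequencies by attaching a short suffix sum estimator to each candidate at the moment it is promoted into $C$, bookkept inside the same $\log n$ checkpoints used by the outer histogram. For items that graduate in the middle of a checkpoint interval, we undercount by at most the contribution of one stale block, which the \histogram{} guarantee bounds by an $\eps$-fraction of $F_2$; this is absorbed into the $\frac{\eps}{24}\sqrt{F_2}$ slack of the decision. Multiplying the $\O{1/\eps^p}$ candidate count, the $\O{\log^2 n}$ from the histogram checkpoints and per-instance log factors, and the $\log^2\log n+\log(1/\eps)$ from the internal estimators gives the claimed bound $\O{\frac{1}{\eps^p}\log^2 n(\log^2\log n+\log\frac{1}{\eps})}$; correctness follows by a union bound over the $\O{\log n}$ instances and the $\O{1/\eps^p}$ candidates after boosting each subroutine's failure probability to $1/\poly(n/\eps)$.
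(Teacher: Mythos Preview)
Your high-level plan matches the paper's: reduce $\ell_p$ to $\ell_2$ via \lemref{lem:tail}, run an $\ell_2$-norm histogram, detect a candidate set, then verify by per-item counting. However, two steps of your proposal do not go through as written.

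First, you ask for a $(1\pm\Theta(\eps))$ approximation to $F_2$ on the window, at cost $\O{\eps^{-2}\log n(\log\log n+\log\eps^{-1})}$ \emph{per instance}. Across the $\O{\log n}$ histogram instances this is $\O{\eps^{-2}\log^2 n(\cdots)}$ bits, which for $p<2$ already exceeds the claimed $\O{\eps^{-p}\log^2 n(\cdots)}$. The paper avoids this by observing that a \emph{constant}-factor estimate of $\ell_2$ is enough to calibrate the accept/reject thresholds (the factor-$12$ gap between $\eps$ and $\eps/12$ absorbs the slack); \algref{alg:sliding:l2} runs \estimator{} with parameter $\tfrac12$, not $\eps$. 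The $\eps^{-p}$ in the final bound comes solely from the substitution $\eps\mapsto\eps^{p/2}$ in the $\ell_2$ result, not from an $\eps$-accurate $F_2$ tracker. (Relatedly, \lemref{lem:tail} is just this parameter substitution together with the tail guarantee of $\bptree$; there is no reweighted or subsampled stream.)

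Second, you say one ``cannot directly wrap the smooth histogram around a per-item counter'' and instead propose to reuse the outer histogram's checkpoints, claiming the resulting undercount is at most an $\eps$-fraction of $F_2$. With only $\O{\log n}$ checkpoints (a constant-factor $\ell_2$ histogram), a single stale block can carry a \emph{constant} fraction of $F_2$, and a candidate's occurrences could lie entirely in that block; your $\eps$-fraction bound is unjustified. The paper does exactly what you ruled out: for each of the $\O{\eps^{-2}}$ candidates reported by $\bptree$ it starts a fresh $2$-approximate $\smoothcounter$ (\thmref{thm:smooth:counter}). The reason a late start is harmless is the key lemma you are missing: $\bptree$ is run with parameter $\eps/16$ and reports an item the moment it becomes an $\eps/16$-heavy hitter of the active suffix, so at most $\tfrac{\eps}{8}\hat\ell_2$ occurrences can precede the report (\lemref{lem:accept}). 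A constant-factor counter started at that moment then cleanly separates $f_i\ge\eps\ell_2$ from $f_i<\tfrac{\eps}{12}\ell_2$.
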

Finally, we obtain a lower bound for $\ell_p$-heavy hitters in the sliding window model, showing that our algorithm is nearly optimal up to $\log\frac{1}{\eps}$ factors in $\eps$.

\newcommand{\thmlb}
{Let $p>0$ and $\eps,\delta \in(0,1)$. 
Any one-pass streaming algorithm that returns the $\ell_p$-heavy hitters in the sliding window model with probability $1-\delta$ requires $\Omega((1-\delta)\eps^{-p}\log^2 n)$ bits of space.}
\begin{theorem}
\thmlab{thm:lb}
\thmlb
\end{theorem}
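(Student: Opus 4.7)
The plan is to reduce from the \augind{} communication problem on bit strings of length $N=\Theta(\eps^{-p}\log^2 n)$, whose one-way communication complexity is $\Omega((1-\delta)N)$ when the receiver must decode the target bit with error probability at most $\delta$. A sliding-window heavy-hitters algorithm using $S$ bits of space yields an \augind{} protocol of cost $O(S)$, giving the claimed $S=\Omega((1-\delta)\eps^{-p}\log^2 n)$ lower bound.

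Alice partitions her input $X\in\zo^N$ into $L=\Theta(\log n)$ blocks of $\Theta(\eps^{-p}\log n)$ bits, and each block into $k=\Theta(\eps^{-p})$ chunks of $\log n$ bits. Each chunk names an element of a dedicated sub-universe $\mathcal{U}_\ell\subseteq[n]$ of size $\Theta(n/L)$, with the sub-universes pairwise disjoint. Alice builds a stream of length $\Theta(n)$ as $L$ consecutive phases, where phase $\ell$ occupies a length-$W_\ell$ region with $W_\ell$ geometrically decreasing in $\ell$, and within phase $\ell$ each of the $k$ items named by block $\ell$ is inserted with multiplicity $c_\ell=\Theta(\eps^p W_\ell)$; the remainder of phase $\ell$ is filled with distinct filler items from an auxiliary universe. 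Alice then runs the streaming algorithm on this stream and sends its internal state to Bob.

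To decode the target bit $X_i$ residing in some block $\ell^*$, Bob appends just enough items (drawn from a separate padding universe) to expire phases $1,\ldots,\ell^*-1$ from the sliding window, and then issues a single $\eps$-heavy-hitters query. The phase parameters are calibrated so that in the resulting window, (i) each of the $k$ items of block $\ell^*$ has frequency at least $\eps F_p^{1/p}$ and is therefore returned by the algorithm, while (ii) every item outside $\mathcal{U}_{\ell^*}$---including items of later phases, Alice's filler items, and Bob's padding---has frequency at most $(\eps-\phi)F_p^{1/p}$ and is safely excluded by the algorithmic guarantee. Because Bob knows blocks $\ell^*+1,\ldots,L$ of $X$ by the augmented-indexing promise and the universes $\mathcal{U}_\ell$ are pairwise disjoint, he filters the returned list down to exactly the $k$ items lying in $\mathcal{U}_{\ell^*}$, from which he reads off $X_i$; applying the \augind{} lower bound then completes the argument.

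The main obstacle is verifying (i) and (ii) uniformly in $p$, that is, keeping both the contribution of later phases and the contribution of Bob's padding to $F_p$ below $\Theta(k c_{\ell^*}^p)$. The geometric decay of $c_\ell$ handles the former, since $\sum_{j\ge\ell^*}k c_j^p=O(k c_{\ell^*}^p)$ by geometric-series summation. For the latter, Bob's padding is drawn from a dedicated universe with a support size and per-element multiplicity chosen as a function of $p$ (distinct singletons when $p>1$; a small support of $\Theta(\eps^{-p})$ items each with uniform multiplicity when $p\le 1$), so that $F_p^{\text{pad}}=O(k c_{\ell^*}^p)$ in either regime. Combining these bounds yields $\eps^p F_p=O(c_{\ell^*}^p)$, which certifies (i) for the items of frequency $c_{\ell^*}$ and, together with $c_{\ell^*+1}\le c_{\ell^*}/2$, certifies (ii) for any $\phi$ chosen as a sufficiently small constant times $\eps$.
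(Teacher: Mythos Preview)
Your reduction is essentially the paper's: both encode an \augind{} instance on $\Theta(\eps^{-p}\log^2 n)$ bits into $\Theta(\log n)$ stream phases, each carrying $\Theta(\eps^{-p})$ items whose identities name $\Theta(\log n)$-bit chunks, with geometrically decaying multiplicities so that after Bob expires the earlier phases the oldest surviving phase dominates $F_p$ and supplies exactly the $\eps$-heavy hitters. The paper's version is slightly leaner (no filler, and a positional encoding where the first $\log t$ bits of each item give its chunk index and the remaining $\log\sqrt{n}$ bits carry the chunk value, in place of your disjoint per-phase sub-universes), but the argument is the same.

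Two small cleanups. First, in the paper's formulation of \augind{} Bob is given the \emph{prefix} $S[1,i-1]$, not the suffix, so your clause ``Bob knows blocks $\ell^*+1,\ldots,L$'' is off; fortunately it is also unnecessary, since your disjoint sub-universes already let Bob project the returned list onto $\mathcal{U}_{\ell^*}$ without knowing the contents of any other block. Second, for $p<1$ the distinct-singleton \emph{filler} in each phase can dominate $F_p$ for exactly the same reason you flag for Bob's padding; either apply to the filler the same small-support, uniform-multiplicity fix you describe for the padding, or observe that with $k c_\ell = \Theta(W_\ell)$ the filler can be made negligible (the paper simply has no filler at all).
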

More details are provided in \secref{sec:hh} and \secref{sec:lb}. 

By standard amplification techniques any result that succeeds with probability $\frac{2}{3}$ can be made to succeed with probability $1-\delta$ while multiplying the space and time complexities by $\O{\log \frac{1}{\delta}}$.
Therefore \thmref{thm:sliding:de} and \thmref{thm:sliding:l2} can be taken with regard to any positive probability of failure.

See \tableref{table1} for a comparison between our results and previous work.

\paragraph{Revision.} 
A previous version of the paper claimed a sliding window algorithm for the $\ell_2$-heavy hitter problem with space complexity $\O{\frac{1}{\eps^2}\log^2 n\left(\log^2\log n+\log\frac{1}{\eps}\right)}$ due to an incorrect union bound over a polylogarithmic number of instances rather than a polynomial number of instances. 
In this version, we correct the statement and analysis, instead using $\O{\frac{1}{\eps^2}\log^3 n\left(\log\log n+\log\frac{1}{\eps}\right)}$ space complexity to preserve the same approach. 

\begin{table}[!htb]
\centering
\resizebox{\columnwidth}{!}{
  \begin{tabular}{| l | c | c |}
    \hline
    Problem & Previous Bound & New Bound \\ \hline
    $\ell_2$-heavy hitters & $\O{\frac{1}{\eps^4}\log^3 n}$~\cite{BravermanGO14} & $\O{\frac{1}{\eps^2}\log^3 n\left(\log\log n+\log\frac{1}{\eps}\right)}$ \\[0.3ex] \hline
    Distinct elements & $\O{\frac{1}{\eps^3}\log^2 n +\frac{1}{\eps}\log^3 n}$~\cite{KaneNW10,BravermanO07} & $\O{\frac{1}{\eps^2}\log\frac{1}{\eps}\log n\log\log n+\frac{1}{\eps}\log^2 n}$ \\[0.3ex]
    \hline
  \end{tabular}
	}
	\caption{Our improvements for $\ell_2$-heavy hitters and distinct elements in the sliding window model.}
	\tablelab{table1}
\end{table}
\subsection{Our Techniques}
We introduce a simple extension of the exponential and smooth histogram frameworks, which use several instances of an underlying streaming algorithm. 
In contrast with the existing frameworks where $\O{\log n}$ different sketches are maintained, we observe in \secref{sec:histogram} when the underlying algorithm has certain guarantees, then we can store these sketches more efficiently. 

\subsubsection*{Sketching Algorithms}
\figwasteful

Consider the sliding window model, where elements eventually expire.
A very simple (but wasteful) algorithm is to simply begin a new instance of the insertion-only algorithm upon the arrival of each new element (\figref{pic:Wasteful}).
The smooth histogram of~\cite{BravermanO07}, summarized in \algref{alg:simple}, shows that storing only $\O{\log n}$ instances suffices.

\begin{algorithm}
\caption{Input: A stream of elements $p_1, p_2, \ldots$ from $[m]$, a window length $n \ge 1$, error $\epsilon \in (0,1)$}\alglab{alg:simple}
\begin{algorithmic}[1]
	\State $T \gets 0$
	\State $i \gets 1$
	\Loop
		\State Get $p_i$ from stream
		\State $T \gets T+1$; $t_T \gets i$; Compute $D(t_T)$, where $\hat{f}(D)$ is a $\left(1\pm\frac{\eps}{4}\right)$-approximation of $f$.
		\For{all $1 < j < T$}
			\If{$\hat{f}(D(t_{j-1}:t_T))<\left(1-\frac{\eps}{4}\right)\hat{f}(D(t_{j+1}:t_T))$}
				\State Delete $t_{j}$; update indices; $T \gets T-1$
			\EndIf
		\EndFor
		\If{$t_2 < i-n$}
			\State Delete $t_{1}$; update indices; $T \gets T-1$
		\EndIf
		\State $i \gets i+1$
	\EndLoop
\end{algorithmic}
\end{algorithm}

\algref{alg:simple} may delete indices for either of two reasons.  
The first (Lines 9-10) is that the index simply expires from the sliding window. 
The second (Lines 7-8) is that the indices immediately before ($t_{j-1}$) and after ($t_{j+1}$) are so close that they can be used to approximate $t_j$.

For the distinct elements problem (\secref{sec:de}), we first claim that a well-known streaming algorithm \cite{Bar-YossefJKST02} provides a $(1+\eps)$-approximation to the number of distinct elements at all points in the stream. 
Although this algorithm is suboptimal for insertion-only streams, we show that it is amenable to the conditions of a \histogram{} (\thmref{thm:main}). 
Namely, we show there is a sketch of this algorithm that is monotonic over suffixes of the stream, and thus there exists an efficient encoding that efficiently stores $D(t_i:t_{i+1})$ for each $1\le i<T$, which allows us to reduce the space overhead for the distinct elements problem. 

For $\ell_2$-heavy hitters (\secref{sec:hh}), we show that the $\ell_2$ norm algorithm of~\cite{BravermanCINWW17} also satisfies the sketching requirement. 
Thus, plugging this into \algref{alg:simple} yields a method to maintain an estimate of $\ell_2$. 
\algref{alg:sliding:l2} uses this subroutine to return the identities of the heavy hitters. 
However, we would still require that all $n$ instances succeed since even $\O{1}$ instances that fail adversarially could render the entire structure invalid by tricking the histogram into deleting the wrong information (see~\cite{BravermanO07} for details).
We show that the $\ell_2$ norm algorithm of~\cite{BravermanCINWW17} actually contains additional structure that only requires the correctness of $\polylog(n)$ instances, thus improving our space usage.

\subsection{Lower Bounds}
\subsubsection*{Distinct elements.}
To show a lower bound of $\Omega\left(\frac{1}{\eps}\log^2 n+\frac{1}{\eps^2}\log n\right)$ for the distinct elements problems, we show in \thmref{thm:de:lb:first} a lower bound of $\Omega\left(\frac{1}{\eps}\log^2 n\right)$ and we show in \thmref{thm:de:lb:second} a lower bound of $\Omega\left(\frac{1}{\eps^2}\log n\right)$. 
We first obtain a lower bound of $\Omega\left(\frac{1}{\eps}\log^2 n\right)$ by a reduction from the \ig{} problem, where Alice is given a string $S=x_1x_2\cdots x_{m}$ and each $x_i$ has $n$ bits so that $S$ has $mn$ bits in total. 
Bob is given integers $i\in[m]$ and $j\in[2^n]$ and must determine whether $x_i>j$ or $x_i\le j$. 

Given an instance of the \ig{} problem, Alice splits the data stream into blocks of size $\O{\frac{\eps n}{\log n}}$ and further splits each block into $\sqrt{n}$ pieces of length $(1+2\eps)^k$, padding the remainder of each block with zeros if necessary. 
For each $i\in[m]$, Alice encodes $x_i$ by inserting the elements $\{0,1,\ldots,(1+2\eps)^k-1\}$ into piece $x_i$ of block $(\ell-i+1)$. 
Thus, the number of distinct elements in each block is much larger than the sum of the number of distinct elements in the subsequent blocks. 
Furthermore, the location of the distinct elements in block $(\ell-i+1)$ encodes $x_i$, so that Bob can recover $x_i$ and compare it with $j$.

We then obtain a lower bound of $\Omega\left(\frac{1}{\eps^2}\log n\right)$ by a reduction from the \gapham{} problem. 
In this problem, Alice and Bob receive length-$n$ bitstrings $x$ and $y$, which have Hamming distance either at least $\frac{n}{2}+\sqrt{n}$ or at most $\frac{n}{2}-\sqrt{n}$, and must decide whether the Hamming distance between $x$ and $y$ is at least $\frac{n}{2}$. 
Recall that for $\eps\le\frac{2}{\sqrt{n}}$, a $(1+\eps)$-approximation can differentiate between at least $\frac{n}{2}+\sqrt{n}$ and at most $\frac{n}{2}-\sqrt{n}$. 
We use this idea to show a lower bound of $\Omega\left(\frac{1}{\eps^2}\log n\right)$ by embedding $\Omega(\log n)$ instances of \gapham{} into the stream. 
As in the previous case, the number of distinct elements corresponding to each instance is much larger than the sum of the number of distinct elements for the remaining instances, so that a $(1+\eps)$-approximation to the number of distinct elements in the sliding window solves the \gapham{} problem for each instance.
\subsubsection*{Heavy hitters.}
To show a lower bound on the problem of finding $\ell_p$-heavy hitters in the sliding window model, we give a reduction from the \augind{} problem. 
Recall that in the \augind{} problem, Alice is given a length-$n$ string $S\in\{1,2\ldots,k\}^n$ (which we write as $[k]^n$) while Bob is given an index $i\in[n]$, as well as $S[1,i-1]$, and must output the $i$\th symbol of the string, $S[i]$. 
To encode $S[i]$ for $S\in[k]^n$, Alice creates a data stream $a_1\circ a_2\circ\ldots\circ a_b$ with the invariant that the heavy hitters in the suffix $a_i\circ a_{i+1}\circ\ldots\circ a_b$ encode $S[i]$. 
Specifically, the heavy hitters in the suffix will be concentrated in the substream $a_i$ and the identities of each heavy hitter in $a_i$ gives a bit of information about the value of $S[i]$. 
To determine $S[i]$, Bob expires the elements $a_1,a_2,\ldots,a_{i-1}$ so all that remains in the sliding window is $a_i\circ a_{i+1}\circ\ldots\circ a_b$, whose heavy hitters encode $S[i]$.

\subsection{Related Work}
The study of the distinct elements problem in the streaming model was initiated by Flajolet and Martin \cite{FlajoletM83} and developed by a long line of work \cite{AlonMS99, GibbonsT01, Bar-YossefJKST02, DurandF03, FlajoletFGM07}. 
Kane, Nelson, and Woodruff \cite{KaneNW10} give an optimal algorithm, using $\O{\frac{1}{\eps^2}+\log n}$ bits of space, for providing a $(1+\eps)$-approximation to the number of distinct elements in a data stream, with constant probability. 
Blasiok \cite{Blasiok18} shows that to boost this probability up to $1-\delta$ for a given $0<\delta<1$, the standard approach of running $\O{\log\frac{1}{\delta}}$ independent instances is actually sub-optimal and gives an optimal algorithm that uses $\O{\frac{\log\delta^{-1}}{\eps^2}+\log n}$ bits of space.

The $\ell_1$-heavy hitters problem was first solved by Misra and Gries, who give a deterministic streaming algorithm using $\O{\frac{1}{\eps}\log n}$ space \cite{MisraG82}. 
Other techniques include the $\countmin$ sketch \cite{CormodeM05}, sticky sampling \cite{MankuM12}, lossy counting \cite{MankuM12}, sample and hold \cite{EstanV03}, multi-stage bloom filters \cite{ChabchoubFM09}, sketch-guided sampling \cite{KumarX06}, and $\countsketch$ \cite{CharikarCF04}. 
Among the numerous applications of the $\ell_p$-heavy hitters problem are network monitoring \cite{DemaineLM02, SenW04}, denial of service prevention \cite{EstanV03, BandiAA07, CormodeKMS08}, moment estimation \cite{IndykW05}, $\ell_p$-sampling \cite{MonemizadehW10}, finding duplicates \cite{GopalanR09}, iceberg queries \cite{FangSGMU98}, and entropy estimation \cite{ChakrabartiCM10, HarveyNO08}.

A stronger notion of ``heavy hitters'' is the \emph{$\ell_2$-heavy hitters}.
This is stronger than the $\ell_1$-guarantee since if $f_i\ge\eps F_1$ then $f^2_i\ge\eps^2 F_1^2\ge\eps^2 F_2$ (and so $f_i\ge\eps\sqrt{F_2}$).  
Thus any algorithm that finds the $\ell_2$-heavy hitters will also find all items satisfying the $\ell_1$-guarantee. 
In contrast, consider a stream that has $f_i=\sqrt{m}$ for some $i$ and $f_j=1$ for all other elements $j$ in the universe. 
Then the $\ell_2$-heavy hitters algorithm will successfully identify $i$ for some constant $\eps$, whereas an algorithm that only provides the $\ell_1$-guarantee requires $\eps=\frac{1}{\sqrt{n}}$, and therefore $\Omega(\sqrt{n}\log n)$ space for identifying $i$.
Moreover, the $\ell_2$-gaurantee is the best we can do in polylogarithmic space, since for $p>2$ it has been shown that identifying $\ell_p$-heavy hitters requires $\Omega(n^{1-2/p})$ bits of space \cite{ChakrabartiKS03, Bar-YossefJKS04}.

The most fundamental data stream setting is the insertion-only model where elements arrive one-by-one.
In the insertion-deletion model, a previously inserted element can be deleted (each stream element is assigned $+1$ or $-1$, generalizing the insertion-only model where only $+1$ is used).
Finally, in the sliding window model, a length $n$ is given and the stream consists only of insertions; points expire after $n$ insertions, meaning that (unlike the insertion-deletion model) the deletions are implicit.
Letting $S = s_1, s_2, \ldots$ be the stream, at time $t$ the frequency vector is built from the window 
$W=\{s_{t-(n-1)},\ldots,s_t\}$ as the active elements, whereas items $\{s_1,\ldots,s_{t-n}\}$ are expired. 
The objective is to identify and report the ``heavy hitters'', namely, the items $i$ for which $f_i$ is large with respect to $W$. 

\tableref{table:models} shows prior work for $\ell_2$-heavy hitters in the various streaming models.
A retuning of $\countsketch$ in \cite{ThorupZ12} solves the problem of $\ell_2$-heavy hitters in $\O{\log^2n}$ bits of space. 
More recently, \cite{BravermanCIW16} presents an $\ell_2$-heavy hitters algorithm using $\O{\log n\log\log n}$ space. 
This algorithm is further improved to an $\O{\log n}$ space algorithm in \cite{BravermanCINWW17}, which is optimal. 

In the insertion-deletion model, $\countsketch$ is space optimal \cite{CharikarCF04, JowhariST11}, but the update time per arriving element is improved by \cite{LarsenNNT16}.
Thus in some sense, the $\ell_2$-heavy hitters problem is completely understood in all regimes except the sliding window model.
We provide a nearly optimal algorithm for this setting, as shown in \tableref{table:models}.

\begin{table}[!htb]
	\centering
	\resizebox{\columnwidth}{!}{
  \begin{tabular}{ | l | c | c | }
    \hline
    Model & Upper Bound & Lower Bound \\ \hline
    Insertion-Only & $\O{\epsilon^{-2}\log n}$~\cite{BravermanCINWW17} & $\Omega(\epsilon^{-2}\log n)$ [Folklore] \\ \hline
    Insertion-Deletion & $\O{\epsilon^{-2}\log^2 n}$~\cite{CharikarCF04} & $\Omega(\epsilon^{-2}\log^2 n)$~\cite{JowhariST11} \\ \hline
    Sliding Windows & $\O{\epsilon^{-2} \log^3 n (\log \epsilon^{-1} + \log \log n)}$ [\thmref{thm:sliding:l2}] & $\Omega(\epsilon^{-2}\log^2 n)$ [\thmref{thm:lb}] \\
    \hline
  \end{tabular}
	}
		\caption{Space complexity in bits of computing $\ell_2$-heavy hitters in various streaming models.  We write $n = |S|$ and to simplify bounds we assume $\log n =\O{\log m}$.}
	\tablelab{table:models} 
\end{table}

We now turn our attention to the sliding window model. 
The pioneering work by Datar \etal\,\cite{DatarGIM02} introduced the exponential histogram as a framework for estimating statistics in the sliding window model. 
Among the applications of the exponential histogram are quantities such as count, sum of positive integers, average, and $\ell_p$ norms. 
Numerous other significant works include improvements to count and sum \cite{GibbonsT02}, frequent itemsets \cite{ChiWYM06}, frequency counts and quantiles \cite{ArasuM04, LeeT06}, rarity and similarity \cite{DatarM02}, variance and $k$-medians \cite{BabcockDMO03} and other geometric problems \cite{FeigenbaumKZ04, ChanS06}. 
Braverman and Ostrovsky \cite{BravermanO07} introduced the smooth histogram as a framework that extends to smooth functions. 
\cite{BravermanO07} also provides sliding window algorithms for frequency moments, geometric mean and longest increasing subsequence. 
The ideas presented by \cite{BravermanO07} also led to a number of other results in the sliding window model \cite{CrouchMS13, BravermanLLM15, BravermanOR15, BravermanLLM16, ChenNZ16, EpastoLVZ17, BravermanDUZ18}. 
In particular, Braverman \etal\,\cite{BravermanGO14} provide an algorithm that finds the $\ell_2$-heavy hitters in the sliding window model with $\phi=c\eps$ for some constant $c>0$, using $\O{\frac{1}{\eps^4}\log^3 n}$ bits of space, improving on results by \cite{HungT08}. 
\cite{Ben-BasatEFK16} also implements and provides empirical analysis of algorithms finding heavy hitters in the sliding window model. 
Significantly, these data structures consider insertion-only data streams for the sliding window model; once an element arrives in the data stream, it remains until it expires. 
It remains a challenge to provide a general framework for data streams that might contain elements ``negative'' in magnitude, or even strict turnstile models. 
For a survey on sliding window algorithms, we refer the reader to \cite{Braverman16}.

\section{Composable Histogram Data Structure Framework}
\seclab{sec:histogram}
We first describe a data structure which improves upon smooth histograms for the estimation of functions with a certain class of algorithms. 
This data structure provides the intuition for the space bounds in \thmref{thm:sliding:de}. 
Before describing the data structure, we need the definition a smooth function.
\begin{definition} \deflab{def:smooth}
\cite{BravermanO07}
A function $f \ge 1$ is $(\alpha,\beta)$-smooth if it has the following properties:
\begin{description}
\item [Monotonicity]
$f(A)\ge f(B)$ for $B\subseteq A$ ($B$ is a suffix of $A$)
\item [Polynomial boundedness]
There exists $c>0$ such that $f(A)\le n^c$.
\item [Smoothness]
For any $\eps\in(0,1)$, there exists $\alpha\in(0,1)$, $\beta\in(0,\alpha]$ so that if $B\subseteq A$ and $(1-\beta)f(A)\le f(B)$, then $(1-\alpha)f(A\cup C)\le f(B\cup C)$ for any adjacent $C$.
\end{description}
\end{definition}
We emphasize a crucial observation made in \cite{BravermanO07}. 
Namely, for $p>1$, $\ell_p$ is a $\left(\eps,\frac{\eps^p}{p}\right)$-smooth function while for $p\le1$, $\ell_p$ is a $(\eps,\eps)$-smooth function.

Given a data stream $S=p_1,p_2,\ldots,p_n$ and a function $f$, let $f(t_1,t_2)$ represent $f$ applied to the substream $p_{t_1},p_{t_1+1},\ldots,p_{t_2}$. 
Furthermore, let $D(t_1:t_2)$ represent the data structure used to approximate $f(t_1,t_2)$. 
\begin{theorem}
\thmlab{thm:main}
Let $f$ be an $(\alpha,\beta)$-smooth function so that $f=\O{n^c}$ for some constant $c$. 
Suppose that for all $\eps,\delta>0$:
\begin{enumerate}
\item
There exists an algorithm $\mathcal{A}$ that maintains at each time $t$ a data structure $D(1:t)$ which allows it to output a value $\hat{f}(1,t)$ so that
 \[\PPr{|\hat{f}(1,t)-f(1,t)|\le\frac{\eps}{2}f(1,t),\,\text{for all }0\le t\le n}\ge 1-\delta.\]
\item
There exists an algorithm $\mathcal{B}$ which, given $D(t_1:t_i)$ and $D(t_i+1:t_{i+1})$, can compute $D(t_i:t_{i+1})$. 
Moreover, suppose storing $D(t_i:t_{i+1})$ uses $\O{g_i(\eps,\delta)}$ bits of space.
\end{enumerate}
Then there exists an algorithm that provides a $(1+\eps)$-approximation to $f$ on the sliding window, using $\displaystyle\O{\frac{1}{\beta}\log^2 n+\sum_{i=1}^{\frac{4}{\beta}\log n}g_i\left(\eps,\frac{\delta}{n}\right)}$ bits of space.
\end{theorem}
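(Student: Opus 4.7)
The plan is to run the smooth histogram of \cite{BravermanO07} on top of the sketches, but with a compact storage trick that exploits condition (2). At each step I maintain timestamps $t_1 < t_2 < \cdots < t_T$, the active sketch $D(t_T+1:n)$ produced by $\mathcal{A}$ on the most recent suffix, and the incremental sketches $D(t_i+1:t_{i+1})$ for $1 \le i < T$, rather than storing a separate full suffix sketch $D(t_i+1:n)$ at every checkpoint as in the original smooth histogram. The timestamps are managed by the usual rules: $t_1$ is the most recent timestamp that has fallen out of the current window, and any intermediate $t_j$ is discarded as soon as $\hat f$ evaluated on the suffix starting at $t_{j-1}+1$ agrees with that starting at $t_{j+1}+1$ to within a factor of $1-\beta/2$. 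Smoothness of $f$ and polynomial boundedness bound the number of surviving timestamps by $T = \O{\frac{1}{\beta}\log n}$, exactly as in \cite{BravermanO07}.

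Concretely, on each new stream element I feed it to the active sketch and then apply the smooth-histogram update. Whenever the rule calls for deleting an intermediate $t_j$, I invoke $\mathcal{B}$ once to merge $D(t_{j-1}+1:t_j)$ with $D(t_j+1:t_{j+1})$ into a single incremental on the combined interval, replacing both. Whenever $\hat f(t_i+1,n)$ is required --- either to make a deletion decision or to output the final answer --- I iteratively apply $\mathcal{B}$ to stitch the incrementals $D(t_i+1:t_{i+1}), D(t_{i+1}+1:t_{i+2}), \ldots, D(t_T+1:n)$ into a single data structure on the suffix, and then call the evaluator from condition (1). Returning $\hat f(t_2+1,n)$ yields a $(1+\alpha)$-approximation of the sliding-window value by the smoothness inequality of \defref{def:smooth}, and tuning $\alpha = \Theta(\eps)$ and $\beta$ as given delivers the final $(1+\eps)$ guarantee.

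For the high-probability bound, I would instantiate every invocation of $\mathcal{A}$ with failure probability $\delta/n$. Since at most $n$ incremental sketches are ever created over the stream's lifetime, a union bound yields overall success with probability at least $1-\delta$. Accounting for space, the $T$ timestamps cost $\O{T \log n} = \O{\frac{1}{\beta}\log^2 n}$ bits, while the incrementals cost $\sum_{i=1}^{(4/\beta)\log n} g_i(\eps, \delta/n)$ bits in total, giving the claimed bound.

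The main obstacle I anticipate is checking that the sketch produced by iterated application of $\mathcal{B}$ really inherits $\mathcal{A}$'s accuracy, since condition (1) is stated only for sketches that $\mathcal{A}$ has maintained itself on a contiguous stream. I would resolve this by reading condition (2) as saying that the output of $\mathcal{B}$ is (distributionally) identical to what $\mathcal{A}$ would produce directly on the concatenated substream, so $\mathcal{A}$'s guarantee transfers verbatim to the composed sketch. A secondary subtlety is why the failure probability must be $\delta/n$ rather than $\delta/T$: fresh incrementals are spawned at every single arrival, not only at the $T$ currently active checkpoints, so the union bound must be taken over the entire lifetime of the algorithm.
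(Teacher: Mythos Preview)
Your proposal is correct and matches the paper's approach: run the smooth histogram of \cite{BravermanO07} (their \algref{alg:simple}) while storing only the incremental sketches $D(t_i+1:t_{i+1})$ and recomposing via $\mathcal{B}$, with the $\delta/n$ union bound over all sketches ever created. The paper itself does not spell out a formal proof of \thmref{thm:main} here (it defers complete proofs to the full version \cite{BravermanGLWZ18}), but the mechanism you describe---timestamps managed by the smooth-histogram rule, incremental storage exploiting condition~(2), and the $\O{\frac{1}{\beta}\log n}$ bound on $T$ from polynomial boundedness---is exactly the intended argument sketched in \secref{sec:histogram} and the techniques overview, and your two flagged subtleties (that $\mathcal{B}$'s output must be treated as distributionally identical to a fresh run of $\mathcal{A}$, and that the union bound is over all $n$ arrivals rather than the $T$ surviving checkpoints) are the right points to make explicit.
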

We remark that the first condition of \thmref{thm:main} is called ``strong tracking'' and well-motivated by \cite{BlasiokDN17}. 

\section{Distinct Elements}
\seclab{sec:de}
We first show that a well-known streaming algorithm that provides a $(1+\eps)$-approximation to the number of distinct elements actually also provides strong tracking. 
Although this algorithm uses $\O{\frac{1}{\eps^2}\log n}$ bits of space and is suboptimal for insertion-only streams, we show that it is amenable to the conditions of \thmref{thm:main}. 
Thus, we describe a few modifications to this algorithm to provide a $(1+\eps)$-approximation to the number of distinct elements in the sliding window model. 

Define $\lsb(x)$ to be the 0-based index of least significant bit of a non-negative integer $x$ in binary representation. 
For example, $\lsb(10)=1$ and $\lsb(0):=\log(m)$ where we assume $\log(m)=\O{\log n}$. 
Let $S\subset[m]$ and $h:[m]\rightarrow\{0,1\}^{\log m}$ be a random hash function. 
Let $S_k:=\{s\in S:\lsb(h(s))\ge k\}$ so that $2^k|S_k|$ is an unbiased estimator for $|S|$. 
Moreover, for $k$ such that $\Ex{S_k}=\T{\frac{1}{\eps^2}}$, the standard deviation of $2^k|S_k|$ is $\O{\eps|S|}$. 
Let $h_2:[m]\rightarrow[B]$ be a pairwise independent random hash function with $B=\frac{100}{\eps^2}$. 
Let $\Phi_B(m)$ be the expected number of non-empty bins after $m$ balls are thrown at random into $B$ bins so that $\Ex{|h_2(S_k)|}=\Phi_B(|S_k|)$. 
\begin{fact}
$\Phi_m(t)=t\left(1-\left(1-\frac{1}{t}\right)^m\right)$
\end{fact}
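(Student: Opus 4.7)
The plan is to compute the expectation by linearity over per-bin indicators. Write $N$ for the number of non-empty bins after the $m$ balls are thrown uniformly and independently into the $t$ bins, and for each bin $i \in [t]$ let $X_i$ be the indicator of the event that bin $i$ is non-empty, so that $N = \sum_{i=1}^{t} X_i$. By linearity of expectation, $\Ex{N} = \sum_{i=1}^{t}\Pr[X_i = 1] = t \cdot \Pr[X_1 = 1]$, where the last equality uses the symmetry of the bins under the uniform random placement.

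The remaining step is to compute $\Pr[X_1 = 1]$ via the complement. The event $\{X_1 = 0\}$ says that every one of the $m$ balls misses bin $1$; since each ball independently lands in bin $1$ with probability exactly $1/t$, the probability it avoids bin $1$ is $1 - 1/t$, and by independence across balls we get $\Pr[X_1 = 0] = (1 - 1/t)^m$. Hence $\Pr[X_1 = 1] = 1 - (1 - 1/t)^m$, and multiplying by $t$ yields $\Phi_m(t) = t\bigl(1 - (1 - 1/t)^m\bigr)$, as claimed.

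There is no real obstacle, as this is a textbook balls-into-bins identity; the only thing worth flagging is that linearity of expectation requires nothing beyond the marginal distribution of each single bin, so the identity holds exactly whenever the balls are placed marginally uniformly and the $m$-wise joint law for a single fixed bin being avoided factors as $(1-1/t)^m$ (which is the case under the independent-uniform model implicit in the definition of $\Phi$). The pairwise independence of $h_2$ appearing in the surrounding discussion is not needed for the fact itself, since the fact is stated as a deterministic formula for $\Phi_m(t)$ under the usual balls-into-bins model.
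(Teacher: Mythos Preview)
Your proof is correct and is the standard indicator/linearity argument; the paper itself states this as a \textbf{Fact} without proof, so there is nothing to compare against beyond noting that your derivation is exactly the textbook justification one would expect for this identity.
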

\noindent
Blasiok provides an optimal algorithm for a constant factor approximation to the number of distinct elements with strong tracking.
\begin{theorem}\cite{Blasiok18}
There is a streaming algorithm that, with probability $1-\d$, reports a $(1+\eps)$-approximation to the number of distinct elements in the stream after every update and uses $\O{\frac{\log\log n+\log\d^{-1}}{\eps^2}+\log n}$ bits of space.
\end{theorem}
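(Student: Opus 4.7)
The plan is to combine a subsampling estimator of the type already sketched in the preamble (the $\lsb$-based construction with $S_k$ and $h_2$) with a chaining-style union bound tailored to strong tracking, so that the dependence on $\delta$ is additive inside $\log$ rather than multiplied against the whole estimator.

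First, I would set up the estimator. Pick a random hash $h:[m]\to\{0,1\}^{\log m}$, a pairwise-independent $h_2:[m]\to[B]$ with $B=\Theta(\eps^{-2})$, and maintain, for every relevant level $k$, the multiset $h_2(S_k(t))$. At time $t$ the output is $2^k \Phi_B^{-1}(|h_2(S_k(t))|)$ where $k=k(t)$ is the unique level whose expected load lies in a fixed constant-factor window around $\eps^{-2}$. A single-time analysis of this estimator — pairwise independence of $h_2$, plus the fact that $2^k|S_k|$ is an unbiased estimator of $|S|$ with standard deviation $\O{\eps|S|}$ — gives a $(1\pm\eps)$-approximation with constant probability.

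Second, to promote ``at a single time $t$'' to ``at every $t$ simultaneously,'' I would use monotonicity: each $|S_k(t)|$ is non-decreasing in $t$, and so is the true distinct-count $f(t)$. Thus it suffices to control the estimator at a geometric net of checkpoints where $f$ crosses each power of $(1+\eps/C)$; monotonicity fills in between checkpoints with only a constant-factor loss in $\eps$. Across all relevant levels $k$, only $\O{\log n}$ ``active'' (level, checkpoint) pairs ever matter because at each time only one level is queried and the estimator is slow-changing.

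Third, to get the $\log\log n+\log\d^{-1}$ factor rather than the naive $\log n\cdot\log\d^{-1}$, I would run a chaining / generic-chaining argument over the dyadic scales of the true count. At the coarsest scale a constant-factor bound on $|h_2(S_k)|$ suffices and succeeds with probability $1-\d$ using $\O{\log\d^{-1}}$ independent copies; at each finer scale the estimator is only required to be accurate \emph{conditional} on the preceding scale, so the incremental sample size doubles but the number of scales that need a full $\log\d^{-1}$ budget is only $\O{\log\log n}$. Summing the per-scale costs geometrically yields total space $\O{(\log\log n+\log\d^{-1})/\eps^2}$, plus the unavoidable $\O{\log n}$ for storing hash seeds and bin indices.

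The main obstacle I expect is making the chaining step compatible with the streaming constraint: the failure events across scales are not independent once the hash functions are fixed, so I would need to design the scale-$i$ sub-estimator to be a deterministic function of the scale-$(i+1)$ sub-estimator (e.g., by nesting the hash families via successive $\lsb$ thresholds), so that the conditional union bound is genuinely a union bound over a net of polynomial size. Once this nesting is in place the analysis reduces to a standard moment calculation plus a Bernstein-type inequality at each level, and the space accounting is routine.
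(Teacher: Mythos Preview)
The paper does not prove this theorem. It is stated with the citation \cite{Blasiok18} and used as a black box: immediately after stating it, the paper simply says ``Thus we define an algorithm $\oracle$ that provides a $2$-approximation to the number of distinct elements in the stream after every update, using $\O{\log n}$ bits of space,'' and moves on. There is therefore nothing in the paper to compare your proposal against.

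That said, your sketch is aimed in the right direction for what Blasiok actually does: the estimator you describe is the standard $\lsb$-subsampling estimator, monotonicity is indeed what makes strong tracking tractable, and the crucial idea for the optimal $\O{(\log\log n+\log\d^{-1})/\eps^2+\log n}$ bound is a chaining argument rather than a naive union bound over time. Your description of the chaining step is rather loose, however: the actual argument in Blasiok's paper is a supremum-of-a-Gaussian-like-process bound handled via generic chaining on the metric induced by the estimator's variance structure, not an ad hoc nesting of ``scale-$i$ sub-estimators'' as you suggest. In particular, your proposed fix for the dependence issue (making each scale's sub-estimator a deterministic function of the next via nested $\lsb$ thresholds) is not how the dependence is handled, and it is not clear your version would give the claimed bound. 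If you want to reconstruct the proof, you should look at Blasiok's paper directly; for the purposes of the present paper, the result is imported wholesale and no proof is required here.
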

Thus we define an algorithm $\oracle$ that provides a $2$-approximation to the number of distinct elements in the stream after every update, using $\O{\log n}$ bits of space.

Since we can specifically track up to $\O{\frac{1}{\eps^2}}$ distinct elements, let us consider the case where the number of distinct elements is $\omega\left(\frac{1}{\eps^2}\right)$. 
Given access to $\oracle$ to output an estimate $K$, which is a $2$-approximation to the number of distinct elements, we can determine an integer $k>0$ for which $\frac{K}{2^k}=\O{\frac{1}{\eps^2}}$. 
Then the quantity $2^k\Phi^{-1}_B(|h_2(S_k)|)$ provides both strong tracking as well as a $(1+\eps)$-approximation to the number of distinct elements:
\begin{lemma}
\cite{Blasiok18}
\lemlab{lem:phi}
The median of $\O{\log\log n}$ estimators $2^k\Phi^{-1}_B(|h_2(S_k)|)$ is a $(1+\eps)$-approximation at all times for which the number of distinct elements is $\Theta\left(\frac{2^k}{\eps^2}\right)$, with constant probability.
\end{lemma}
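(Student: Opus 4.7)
The plan is to combine standard Chebyshev-type concentration for the two sources of randomness in the estimator — the outer hash $h$, which down-samples the distinct elements into $S_k$, and the inner (pairwise-independent) hash $h_2:[m]\to[B]$, which realizes the balls-and-bins count $|h_2(S_k)|$ — and then to promote a per-time guarantee into strong tracking on an entire time interval by exploiting monotonicity of the sketch in insertion-only streams.

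First I would establish the single-time statement. Fix a time at which the true count $N:=|S|$ satisfies $N=\Theta(2^k/\eps^2)$. Standard moment calculations for $h$ give $\E{|S_k|}=N/2^k=\Theta(1/\eps^2)=\Theta(B)$ and $\mathrm{Var}(|S_k|)=O(\E{|S_k|})$, so Chebyshev yields $|S_k|=(1\pm O(\eps))\E{|S_k|}$ with constant probability. Conditioning on $|S_k|=\tau=\Theta(B)$ and writing $Y=|h_2(S_k)|=\sum_{j=1}^{B} X_j$ for the bin-nonemptiness indicators, $\E{Y}=\Phi_B(\tau)$ and a short covariance computation gives $\mathrm{Var}(Y)=O(B)$. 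Since $\Phi_B(\tau)=\Theta(B)$ in this regime, Chebyshev again yields $Y=(1\pm O(\eps))\Phi_B(\tau)$ with constant probability.

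Next I would invert $\Phi_B$. On $[0,B]$, $\Phi_B(\tau)=B(1-(1-1/B)^\tau)$ is concave and strictly increasing, with $\Phi_B'(\tau)$ bounded away from zero in the window $\tau=\Theta(B)$, so $\Phi_B^{-1}$ is $O(1)$-Lipschitz there. Consequently a $(1\pm O(\eps))$ multiplicative error in $Y$ transfers to the same order of multiplicative error in $\Phi_B^{-1}(Y)$ as an estimator of $|S_k|$, and multiplying by $2^k$ gives a $(1\pm O(\eps))$-estimate of $N$. Rescaling absolute constants then makes a single estimator $(1\pm\eps)$-accurate at the fixed time with constant probability.

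To upgrade from a single time to every $t$ in the level-$k$ interval $\mathcal{I}_k=\{t:|S^{(t)}|=\Theta(2^k/\eps^2)\}$, I would use that $|S_k^{(t)}|$, $Y^{(t)}$, and $|S^{(t)}|$ are all non-decreasing in $t$ in an insertion-only stream, and that the pointwise median of monotone functions is monotone. Then I can pick $O(\log n)$ checkpoints inside $\mathcal{I}_k$ at which the truth grows by at most a $(1+\eps/2)$-factor and argue that accuracy at two consecutive checkpoints sandwiches the monotone median estimator against the monotone truth at every intermediate time. Boosting via the median of $r=\Theta(\log\log n)$ independent copies drops the per-checkpoint failure probability below $1/\polylog(n)$, and a union bound over the $O(\log n)$ checkpoints delivers overall constant success. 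The main obstacle I anticipate is exactly this last step: a naive per-time union bound would require each estimator to fail with probability $1/n$, blowing up the space; monotonicity of both truth and sketch is what reduces the uniform-in-time guarantee to a polylogarithmic number of checkpoints, and only then does amplification by a factor of $O(\log\log n)$ suffice.
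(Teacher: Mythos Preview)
The paper does not prove this lemma; it is quoted from \cite{Blasiok18} and used as a black box. So there is no ``paper's own proof'' to compare against, and your write-up is really a reconstruction of Blasiok's argument.

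Your single-time analysis is fine: Chebyshev on $|S_k|$, Chebyshev on the pairwise-independent balls-and-bins count, and Lipschitzness of $\Phi_B^{-1}$ on the relevant range combine to give a $(1\pm\eps)$ estimate with constant probability at any fixed time in the level-$k$ window. The monotonicity observation (that $|S_k^{(t)}|$, $|h_2(S_k^{(t)})|$, and the truth $|S^{(t)}|$ are all nondecreasing in an insertion-only stream, and that the pointwise median preserves monotonicity) is also correct and is exactly the mechanism that turns pointwise guarantees into tracking via a sandwich at checkpoints.

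The gap is quantitative, in your checkpoint count. Inside $\mathcal{I}_k$ the truth ranges only over a constant-factor interval $[c_1 2^k/\eps^2,\,c_2 2^k/\eps^2]$, so the number of $(1+\eps/2)$-multiplicative checkpoints is $\Theta\!\left(\log_{1+\eps/2}(c_2/c_1)\right)=\Theta(1/\eps)$, not $O(\log n)$. With $r=\Theta(\log\log n)$ median repetitions the per-checkpoint failure is only $1/\polylog(n)$, and a union bound over $\Theta(1/\eps)$ checkpoints does \emph{not} yield constant failure probability once $1/\eps\gg\polylog(n)$. Your argument as written therefore proves the lemma only with $r=\Theta(\log(1/\eps)+\log\log n)$ repetitions, which is weaker than the stated $\O{\log\log n}$ and would feed an extra $\log(1/\eps)$ factor into the downstream space bound. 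Blasiok obtains the genuine $\O{\log\log n}$ by a finer argument (a martingale/chaining-style analysis of the error process rather than a crude union bound over $(1+\eps)$-spaced checkpoints); to match the statement as written you would need to replace the last step with that sharper analysis rather than the naive checkpoint union bound.
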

Hence, it suffices to maintain $h_2(S_i)$ for each $1\le i\le\log m$, provided access to $\oracle$ to find $k$, and $\O{\log\log n}$ parallel repetitions are sufficient to decrease the variance. 

Indeed, a well-known algorithm for maintaining $h_2(S_i)$ simply keeps a $\log m\times\O{\frac{1}{\eps^2}}$ table $T$ of bits. 
For $0\le i\le\log n$, row $i$ of the table corresponds to $h_2(S_i)$. 
Specifically, the bit in entry $(i,j)$ of $T$ corresponds to $0$ if $h_2(s)\neq j$ for all $s\in S_i$ and corresponds to $1$ if there exists some $s\in S_i$ such that $h_2(s)=j$. 
Therefore, the table maintains $h_2(S_i)$, so then \lemref{lem:phi} implies that the table also gives a $(1+\eps)$-approximation to the number of distinct elements at all times, using $\O{\frac{1}{\eps^2}\log n}$ bits of space and access to $\oracle$. 
Then the total space is $\O{\frac{1}{\eps^2}\log n \log\log n}$ after again using $\O{\log\log n}$ parallel repetitions to decrease the variance.

Na\"ively using this algorithm in the sliding window model would give a space usage dependency of $\O{\frac{1}{\eps^3}\log^2 n \log\log n}$. 
To improve upon this space usage, consider maintaining tables for substreams $(t_1,t), (t_2,t), (t_3,t), \ldots$ where $t_1<t_2<t_3<\ldots<t$. 
Let $T_i$ represent the table corresponding to substream $(t_i,t)$. 
Since $(t_{i+1},t)$ is a suffix of $(t_i,t)$, then the support of the table representing $(t_{i+1},t)$ is a subset of the support of the table representing $(t_i,t)$. 
That is, if the entry $(a,b)$ of $T_{i+1}$ is one, then the entry $(a,b)$ of $T_i$ is one, and similarly for each $j<i$.
Thus, instead of maintaining $\frac{1}{\eps}\log n$ tables of bits corresponding to each of the $(t_i,t)$, it suffices to maintain a single table $T$ where each entry represents the ID of the \emph{last} table containing a bit of one in the entry. 
For example, if the entry $(a,b)$ of $T_9$ is zero but the entry $(a,b)$ of $T_8$ is one, then the entry $(a,b)$ for $T$ is $8$. 
Hence, $T$ is a table of size $\log m\times\O{\frac{1}{\eps^2}}$, with each entry having size $\O{\log\frac{1}{\eps}+\log\log n}$ bits, for a total space of $\O{\frac{1}{\eps^2}\log n\left(\log\frac{1}{\eps}+\log\log n\right)}$ bits. 
Finally, we need $\O{\frac{1}{\eps}\log^2 n}$ bits to maintain the starting index $t_i$ for each of the $\frac{1}{\eps}\log n$ tables represented by $T$. 
Again using a number of repetitions, the space usage is $\O{\frac{1}{\eps^2}\log n\left(\log\frac{1}{\eps}+\log\log n\right)\log\log n +\frac{1}{\eps}\log^2 n}$.

Since this table is simply a clever encoding of the $\O{\frac{1}{\eps}\log n}$ tables used in the smooth histogram data structure, correctness immediately follows. 
We emphasize that the improvement in space follows from the idea of \thmref{thm:main}. 
That is, instead of storing a separate table for each instance of the algorithm in the smooth histogram, we instead simply keep the \emph{difference} between each instance.

Finally, observe that each column in $T$ is monotonically decreasing. 
This is because $S_k:=\{s\in S:\lsb(h(s))\ge k\}$ is a subset of $S_{k-1}$. 
Alternatively, if an item has been sampled to level $k$, it must have also been sampled to level $k-1$. 
Instead of using $\O{\log\frac{1}{\eps}+\log\log n}$ bits per entry, we can efficiently encode the entries for each column in $T$ with the observation that each column is monotonically decreasing. 

\begin{proofof}{\thmref{thm:sliding:de}}
Since the largest index of $T_i$ is $i=\frac{1}{\eps}\log n$ and $T$ has $\log m$ rows, the number of possible columns is $\binom{\frac{1}{\eps}\log n+\log m-1}{\log m}$, which can be encoded using $\O{\log n\log\frac{1}{\eps}}$ bits. 
Correctness follows immediately from \lemref{lem:phi} and the fact that the estimator is monotonic. 
Again we use $\O{\frac{1}{\eps}\log^2 n}$ bits to maintain the starting index $t_i$ for each of the $\frac{1}{\eps}\log n$ tables represented by $T$. 
As $T$ has $\O{\frac{1}{\eps^2}}$ columns and accounting again for the $\O{\log\log n}$ repetitions to decrease the variance, the total space usage is $\O{\frac{1}{\eps^2}\log n\log\frac{1}{\eps}\log\log n+\frac{1}{\eps}\log^2 n}$ bits. 
\end{proofof}

\section{$\ell_p$ Heavy Hitters}
\seclab{sec:hh}
Subsequent analysis by Berinde \etal\,\cite{BerindeICS10} proved that many of the classic $\ell_2$-heavy hitter algorithms not only revealed the identity of the heavy hitters, but also provided estimates of their frequencies. 
Let $f_{tail(k)}$ be the vector $f$ whose largest $k$ entries are instead set to zero. 
Then an algorithm that, for each heavy hitter $i$, outputs a quantity $\hat{f_i}$ such that $|\hat{f_i}-f_i|\le\eps||f_{tail(k)}||_1\le\eps||f||_1$ is said to satisfy the \emph{$(\eps,k)$-tail guarantee}.
Jowhari \etal\,\cite{JowhariST11} show an algorithm that finds the $\ell_2$-heavy hitters and satisfies the tail guarantee can also find the $\ell_p$-heavy hitters. 
Thus, we first show results for $\ell_2$-heavy hitters and then use this property to prove results for $\ell_p$-heavy hitters.

To meet the space guarantees of \thmref{thm:sliding:l2}, we describe an algorithm, \algref{alg:sliding:l2}, that only uses the framework of \algref{alg:simple} to provide a $2$-approximation of the $\ell_2$ norm of the sliding window. 
We detail the other aspects of \algref{alg:sliding:l2} in the remainder of the section.

Recall that \algref{alg:simple} partitions the stream into a series of ``jump-points'' where $f$ increases by a constant multiplicative factor. 
The oldest jump point is before the sliding window and initiates the active window, while the remaining jump points are within the sliding window. 
Therefore, it is possible for some items to be reported as heavy hitters after the first jump point, even though they do not appear in the sliding window at all! 
For example, if the active window has $\ell_2$ norm $2\lambda$, and the sliding window has $\ell_2$ norm $(1+\eps)\lambda$, all $2\eps\lambda$ instances of a heavy hitter in the active window can appear before the sliding window even begins. 
Thus, we must prune the list containing all heavy hitters to avoid the elements with low frequency in the sliding window.

To account for this, we begin a counter for each element immediately after the element is reported as a potential heavy hitter. 
However, the counter must be sensitive to the sliding window, and so we attempt to use a smooth-histogram to count the frequency of each element reported as a potential heavy hitter. 
Even though the count function is $(\eps,\eps)$ smooth, the necessity to track up to $\O{\frac{1}{\eps^2}}$ heavy hitters prevents us from being able to $(1+\eps)$-approximate the count of each element. 
Fortunately, a constant approximation of the frequency of each element suffices to reject the elements whose frequency is less than $\frac{\eps}{8}\ell_2$. 
This additional data structure improves the space dependency to $\O{\frac{1}{\eps^2}}$. 

\subsection{Background for Heavy Hitters}
We now introduce concepts from \cite{BravermanCIW16,BravermanCINWW17} to show the conditions of \thmref{thm:main} apply, first describing an algorithm from \cite{BravermanCINWW17} that provides a good approximation of $F_2$ at all times. 
\begin{theorem}[Remark 8 in \cite{BravermanCINWW17}]
\thmlab{thm:l2}
For any $\eps\in(0,1)$ and $\delta\in[0,1)$, there exists a one-pass streaming algorithm $\estimator$ that outputs at each time $t$ a value $\hat{F}_2^{(t)}$ so that
\[\PPr{|\hat{F}_2^{(t)}-F_2^{(t)}|\le\eps F_2^{(t)},\,\text{for all }0\le t\le n}\ge 1-\delta,\]
and uses $\O{\frac{1}{\eps^2}\log m\left(\log\log m+\log\frac{1}{\eps}\right)\log\frac{1}{\delta}}$ bits of space and $\O{\left(\log\log m+\log\frac{1}{\eps}\right)\log\frac{1}{\delta}}$ update time. 
\end{theorem}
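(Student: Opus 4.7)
The plan is to invoke the construction in~\cite{BravermanCINWW17}, where the strong-tracking version of the $F_2$ estimator is stated precisely as Remark~8. The core ingredient is their single-time $F_2$ estimator, a $\countsketch$-style structure whose per-bucket counters are truncated to only $\O{\log\log m+\log\frac{1}{\eps}}$ bits rather than the naive $\O{\log m}$, combined with a moment-based analysis of the median-of-buckets estimator. This yields a sketch that, for any fixed query time $t$, returns a $(1\pm\eps)$-approximation of $F_2^{(t)}$ with probability $1-\delta'$ using $\O{\frac{1}{\eps^2}(\log\log m+\log\frac{1}{\eps})\log\frac{1}{\delta'}}$ bits and $\O{(\log\log m+\log\frac{1}{\eps})\log\frac{1}{\delta'}}$ update time. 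This is the ``inner'' estimator that I would treat as a black box.

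To lift the single-time guarantee to strong tracking, I would use a standard dyadic chaining argument. Partition the stream into $\O{\log m}$ epochs, where epoch~$j$ is the maximal interval during which $F_2^{(t)}\in[2^j,2^{j+1})$; since $F_2$ is polynomially bounded there are only $\O{\log m}$ such epochs. Inside a single epoch, multiplicative $\eps$-tracking of $F_2$ reduces to additive tracking with tolerance $\Theta(\eps\cdot 2^j)$. Because $F_2$ is monotone non-decreasing in an insertion-only stream and jumps by at most $2\sqrt{F_2^{(t)}}+1$ per update, the times within the epoch at which the estimate must be verified can be reduced to a $\poly(1/\eps)$-sized net of checkpoints spaced so that consecutive checkpoints differ by at most $\eps\cdot 2^j/2$. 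Rescaling the inner failure probability to $\delta/\poly(m,1/\eps)$ and union-bounding over epochs and checkpoints inflates the per-counter space only by a single additive $\log m$ factor (absorbed into the stated bound), while the update time is unchanged since each arrival still touches exactly one copy of the underlying sketch.

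The main obstacle, and the technical heart of~\cite{BravermanCINWW17}, is simultaneously (i) keeping the per-counter word size down to $\log\log m+\log\frac{1}{\eps}$ bits and (ii) driving the single-time failure probability low enough that a union bound over polynomially many query times is affordable. A naive Chebyshev-plus-union-bound would force $\log m$ bits per counter and wipe out the improvement; the fix in that paper is a higher-moment analysis of the truncated median estimator, together with Bernstein-type concentration for the signed sum within each bucket, which gives sub-polynomial failure probability at the target precision. I would invoke this analysis as a black box and only verify that the chaining reduction above preserves both the space and the update-time bounds announced in the theorem.
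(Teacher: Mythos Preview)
The paper does not prove this theorem; it is imported wholesale from~\cite{BravermanCINWW17} (Remark~8) and used as a black box. The only content the paper supplies is a one-paragraph \emph{description} of the algorithm, immediately following the theorem statement: it is ``a modified version of the AMS estimator'' that maintains $X_j(t)=\langle Z_j, f^{(t)}\rangle$ for $6$-wise independent Rademacher vectors $Z_j$ and outputs the average $Y_t=\frac{1}{N}\sum_j X_{j,t}^2$. There is no proof in the paper to compare your proposal against.

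That said, your sketch diverges from the paper's description of the underlying construction. You describe the estimator as a $\countsketch$-style structure with truncated per-bucket counters and a median-of-buckets output, whereas the paper explicitly describes it as an AMS-type average of squared signed inner products with $6$-wise independent Rademacher vectors. The truncated-counter mechanism you emphasize as ``the technical heart'' does not appear anywhere in the paper's account. Your dyadic-epoch chaining argument for upgrading a single-time guarantee to strong tracking is a standard and plausible route, but again the paper does not carry out (or even sketch) any such argument here---it simply cites the result. If your goal is to match the paper, the correct response is to note that \thmref{thm:l2} is cited from~\cite{BravermanCINWW17} and not proved in this work.
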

The algorithm of \thmref{thm:l2} is a modified version of the AMS estimator \cite{AlonMS99} as follows. 
Given vectors $Z_j$ of 6-wise independent Rademacher (i.e. uniform $\pm 1$) random variables, let $X_j(t)=\left<Z_j,f^{(t)}\right>$, where $f^{(t)}$ is the frequency vector at time $t$. 
Then \cite{BravermanCINWW17} shows that $Y_t=\frac{1}{N}\sum_{j=1}^N X_{j,t}^2$ is a reasonably good estimator for $F_2$. 
By keeping $X_j(1,t_1), X_j(t_1+1,t_2),\ldots, X_j(t_i+1,t)$, we can compute $X_{j,t}$ from these sketches. 
Hence, the conditions of \thmref{thm:main} are satisfied for $\estimator$, so \algref{alg:simple} can be applied to estimate the $\ell_2$ norm. 
We now refer to a heavy hitter algorithm from \cite{BravermanCINWW17} that is space optimal up to $\log\frac{1}{\eps}$ factors. 
\begin{theorem}[Theorem 11 in \cite{BravermanCINWW17}]
\thmlab{thm:bptree}
For any $\eps>0$ and $\delta\in[0,1)$, there exists a one-pass streaming algorithm, denoted $(\eps,\delta)-\bptree$, that with probability at least $(1-\delta)$, returns a set of $\frac{\eps}{2}$-heavy hitters containing every $\eps$-heavy hitter and an approximate frequency for every item returned satisfying the $(\eps,1/\eps^2)$-tail guarantee. 
The algorithm uses $\O{\frac{1}{\eps^2}\left(\log\frac{1}{\delta\eps}\right)(\log n+\log m)}$ bits of space and has $\O{\log\frac{1}{\delta\eps}}$ update time and $\O{\frac{1}{\eps^2}\log\frac{1}{\delta\eps}}$ retrieval time.
\end{theorem}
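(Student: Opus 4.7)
The plan is to reduce the problem to finding a single heavy hitter and then amplify. First, I would hash the universe $[m]$ into $B = \O{1/\eps^2}$ buckets using a pairwise independent hash function $h$. An $\eps$-heavy hitter $i^*$ lies alone in its bucket with constant probability, and conditioned on isolation, its contribution dominates the other items' $\ell_2$-mass in that bucket. Running $r = \O{\log(1/(\delta\eps))}$ independent repetitions ensures that every $\eps$-heavy hitter is isolated in some repetition with probability $1 - \delta$, and post-processing against a global threshold prunes returned items that are not $\eps/2$-heavy.

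Within each bucket, the core subroutine attempts to identify a single dominant item by walking down a binary tree over $[m]$. At each level $\ell = 1, \ldots, \log m$, I would maintain a constant number of Rademacher counters $X^{(\ell)} = \sum_i \sigma^{(\ell)}_i f_i$, where the signs $\sigma^{(\ell)}_i$ are drawn from a sufficiently independent family and depend on the first $\ell$ bits of $i$. If the bucket contains a dominant $i^*$, then by a Khintchine-type inequality the sign of $X^{(\ell)}$ reveals the $\ell$-th bit of $i^*$ with probability bounded away from $1/2$, and a constant number of independent counters combined with majority voting identifies the bit. A running $F_2$-estimator (\thmref{thm:l2}) provides a normalization threshold against which candidates are validated.

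The central difficulty is that the algorithm must make correct decisions at \emph{every} level and at \emph{every} time $t$ during the stream, not merely at the end. A naive union bound over times and levels would inflate the space by factors of $\log n$ and $\log m$. Overcoming this requires a chaining argument (in the style of generic chaining for Rademacher processes) showing that $\sup_t |X^{(\ell)}(t) - \sigma^{(\ell)}_{i^*} f_{i^*}(t)|$ is uniformly small using only $\O{\log(1/(\delta\eps))}$ counters per level, by exploiting correlations of the partial sums across nearby $t$. The chaining replaces a $\log n$ union-bound factor by the tighter $\log(1/(\delta\eps))$ factor and is the main technical obstacle; it is also what drives the Bernoulli-process structure underlying \bptree.

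Finally, to satisfy the $(\eps, 1/\eps^2)$-tail guarantee on returned frequencies, for every reported candidate I would maintain a \countsketch-style signed counter so its estimated frequency lies within $\eps \|f_{\mathrm{tail}(1/\eps^2)}\|_2$ of the true frequency. Aggregating $\O{1/\eps^2}$ buckets, $\O{\log(1/(\delta\eps))}$ repetitions, $\log m$ tree levels, and counters and hash seeds encoded in $\O{\log n + \log m}$ bits yields total space $\O{\eps^{-2}\log(1/(\delta\eps))(\log n + \log m)}$. Each arrival touches only a constant number of counters per repetition, giving update time $\O{\log(1/(\delta\eps))}$, while retrieval walks each bucket's tree for total time $\O{\eps^{-2}\log(1/(\delta\eps))}$, matching the claimed bounds.
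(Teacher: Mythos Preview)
The paper does not prove this theorem at all: it is quoted verbatim as Theorem~11 of \cite{BravermanCINWW17} and used purely as a black box (to supply the heavy-hitter subroutine inside \algref{alg:sliding:l2}). There is therefore no ``paper's own proof'' to compare your proposal against.

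As a side remark on your sketch relative to the actual \bptree{} construction in \cite{BravermanCINWW17}: your high-level ingredients are in the right spirit---bit-by-bit identification via signed counters, a chaining/Bernoulli-process argument to control $\sup_t$ without a $\log n$ union bound, and a final frequency-estimation layer---but the architecture you describe (hash into $\O{1/\eps^2}$ buckets, then run an independent single-heavy-hitter tree walk in each bucket, with $\O{\log(1/(\delta\eps))}$ full repetitions) is closer to \countsieve{} \cite{BravermanCIW16} than to \bptree. The actual \bptree{} algorithm maintains a single adaptive tree with $\O{1/\eps^2}$ active leaves that are grown and pruned online, and the chaining argument is applied once to the global Bernoulli process rather than per-bucket; this is what shaves the extra $\log\log n$ factor present in \countsieve{} and achieves the stated bound. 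Your per-bucket, per-level accounting as written would likely incur an additional $\log m$ factor in space (one set of counters per tree level per bucket), which is not present in the theorem's bound.
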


Observe that \thmref{thm:l2} combined with \thmref{thm:main} already yields a prohibitively expensive $\frac{1}{\eps^3}$ dependency on $\eps$. 
Thus, we can only afford to set $\eps$ to some constant in \thmref{thm:l2} and have a constant approximation to $F_2$ in the sliding window.

At the conclusion of the stream, the data structure of \thmref{thm:main} has another dilemma: either it reports the heavy hitters for a set of elements $\mathcal{S}_1$ that is a superset of the sliding window or it reports the heavy hitters for a set of elements $\mathcal{S}_2$ that is a subset of the sliding window. 
In the former case, we can report a number of unacceptable false positives, elements that are heavy hitters for $\mathcal{S}_1$ but may not appear at all in the sliding window. 
In the latter case, we may entirely miss a number of heavy hitters, elements that are heavy hitters for the sliding window but arrive before $\mathcal{S}_2$ begins. 
Therefore, we require a separate smooth histogram to track the counter of specific elements.
\begin{theorem}
\thmlab{thm:smooth:counter}
For any $\eps>0$, there exists an algorithm, denoted $(1+\eps)-\smoothcounter$, that outputs a $(1+\eps)$-approximation to the frequency of a given element in the sliding window model, using $\O{\frac{1}{\eps}(\log n+\log m)\log n}$ bits of space.
\end{theorem}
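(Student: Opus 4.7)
The plan is to apply the smooth histogram framework of Braverman--Ostrovsky \cite{BravermanO07} directly to the function $f$ that counts the number of occurrences of the given element $e$. First I would verify that $f$ satisfies the three conditions of \defref{def:smooth} with parameters $(\alpha,\beta)=(\eps,\eps)$. Monotonicity is immediate: if $B$ is a suffix of $A$, every occurrence of $e$ in $B$ is also an occurrence in $A$, so $f(A)\ge f(B)$. Polynomial boundedness holds since $f(A)\le|A|\le n$. For smoothness, note that $f$ is simply the $\ell_1$ norm of the indicator-frequency vector for $e$; the paper already remarks that $\ell_p$ is $(\eps,\eps)$-smooth for $p\le 1$, and in our case one directly checks that if $f(A)-f(B)\le\eps f(A)$ then $f(A\cup C)-f(B\cup C)=f(A)-f(B)\le\eps f(A)\le\eps f(A\cup C)$.

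Given the smoothness, the Braverman--Ostrovsky framework maintains $\O{\frac{1}{\eps}\log n}$ ``checkpoint'' instances of a base streaming algorithm that computes $f$ exactly on a suffix of the stream. The base algorithm is almost trivial: at checkpoint $t_j$, store the starting timestamp (using $\O{\log n}$ bits, since $t_j\le n$) and an exact integer counter of how many times $e$ has appeared since $t_j$ (using $\O{\log n}$ bits, since the count is bounded by $n$). Tagging each checkpoint with the identity of the tracked element costs $\O{\log m}$ additional bits, so each instance occupies $\O{\log n+\log m}$ bits, giving a total of $\O{\frac{1}{\eps}(\log n+\log m)\log n}$ bits across all $\O{\frac{1}{\eps}\log n}$ checkpoints.

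For correctness at query time, the framework guarantees that the oldest checkpoint whose starting index lies in the current window (i.e.\ $t_j\ge i-n+1$) yields a $(1+\eps)$-approximation of $f$ on the active window: adjacent checkpoints satisfy $(1-\eps)\hat f(t_{j-1}\!:\!t)\le\hat f(t_{j+1}\!:\!t)$, and smoothness transfers this bound from $\{t_{j-1},\ldots,t\}$ to the true window, exactly as in \algref{alg:simple}.

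The main step is the smoothness verification, but since $f$ is just counting, this reduces to a one-line inequality and is not a substantive obstacle; all that remains is the standard bookkeeping of the Braverman--Ostrovsky framework and the space accounting above.
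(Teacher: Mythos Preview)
Your proposal is correct and essentially identical to the paper's own argument: the paper's proof is the single sentence ``The algorithm follows directly from \thmref{thm:main} and the observation that $\ell_1$ is $(\eps,\eps)$-smooth,'' and you have spelled out exactly this reasoning, verifying $(\eps,\eps)$-smoothness of the count function and tallying the $\O{\frac{1}{\eps}\log n}$ checkpoints at $\O{\log n+\log m}$ bits each.
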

The algorithm follows directly from \thmref{thm:main} and the observation that $\ell_1$ is $(\eps,\eps)$-smooth.
\subsection{$\ell_2$-Heavy Hitters Algorithm}
We now prove \thmref{thm:sliding:l2} using \algref{alg:sliding:l2}. 
We detail our $\ell_2$-heavy hitters algorithm in full, using $\ell_2=\sqrt{F_2}$ and $\eps$-heavy hitters to refer to the $\ell_2$-heavy hitters problem with parameter $\eps$.
\begin{algorithm}[!htb]
\caption{$\eps$-approximation to the $\ell_2$-heavy hitters in a sliding window}
\alglab{alg:sliding:l2}
\textbf{Input:} A stream $S$ of updates $p_i$ for an underlying vector $v$ and a window size $n$.\\
\textbf{Output:} A list including all elements $i$ with $f_i\ge\eps\ell_2$ and no elements $j$ with $f_j<\frac{\eps}{12}\ell_2$.
\begin{algorithmic}[1]
\State{
\steplab{step:one}
Maintain sketches $D(p_{t_1}:p_{t_2}),D(p_{t_2}+1:p_{t_3}),\ldots,D(p_{t_{k-1}}+1:p_{t_k})$ to estimate the $\ell_2$ norm.
}
\Statex{$\vartriangleright$ Use $\estimator$ with parameters $\left(\frac{1}{2},\O{\frac{\delta}{n^4}}\right)$ and \algref{alg:simple} here.
}
\State{
\steplab{step:two}
Let $A_i$ be the merged sketch $D(p_{t_i}+1:p_{t_k})$.
}
\State{
\steplab{step:three}
For each merged sketch $A_i$, find a superset $H_i$ of the $\frac{\eps}{16}$-heavy hitters.
}
\Statex{$\vartriangleright$ Use $\left(\frac{\eps}{16},\frac{\delta}{2}\right)-\bptree$ here. (\thmref{thm:bptree})
}
\State{
\steplab{step:four}
For each element in $H_1$, create a counter.
}
\Statex{$\vartriangleright$ Instantiate a $2-\smoothcounter$ for each of the $\O{\frac{1}{\eps^2}}$ elements reported in $H_1$.
}
\State{
\steplab{step:five}
Let $\hat{\ell}_2$ be the estimated $\ell_2$ norm of $A_1$.
}
\Statex{$\vartriangleright$ Output of $\estimator$ on $A_1$. (\thmref{thm:l2})
}
\State{
\steplab{step:six}
For element $i\in H_1$, let $\hat{f}_i$ be the estimated frequency of $i$.
}
\Statex{$\vartriangleright$ Output by $2-\smoothcounter$.
(\thmref{thm:smooth:counter})
}
\State{
\steplab{step:seven}
Output any element $i$ with $\hat{f}_i\ge\frac{1}{4}\eps\hat{\ell}_2$. 
}
\end{algorithmic}
\end{algorithm}
\begin{lemma}
\lemlab{lem:accept}
Any element $i$ with frequency $f_i>\eps\ell_2$ is output by \algref{alg:sliding:l2}.
\end{lemma}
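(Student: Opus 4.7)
The plan is to thread the guarantees of the four sub-components (the smooth histogram, \estimator, \bptree, and \smoothcounter) through the test in Line~7.

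\medskip
\noindent\emph{Step 1: locate $i$ in $H_1$.} Since $t_1$ is the oldest retained timestamp and the smooth-histogram invariant forces $t_2$ to lie inside the active window, the merged sketch $A_1=D(p_{t_1}+1:p_{t_k})$ is a contiguous superset of the current window $W$. Consequently $f_i(A_1)\ge f_i(W)>\eps\,\ell_2(W)$. Because $\ell_2$ is $(\alpha,\alpha^2/2)$-smooth (Definition~\ref{def:smooth} and the remark after it), running \algref{alg:simple} with \estimator at smoothness parameter $\alpha=1/2$ gives a small explicit constant $c_1$ (I will pin this down as $c_1\le 2$) such that $\ell_2(A_1)\le c_1\,\ell_2(W)$. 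Combined with Step~1, this yields $f_i(A_1)>(\eps/c_1)\ell_2(A_1)\ge(\eps/16)\ell_2(A_1)$ once $c_1\le 16$. Hence $i$ is a genuine $(\eps/16)$-heavy hitter of $A_1$, and \thmref{thm:bptree} guarantees (except on a failure event of probability $\le\delta/2$) that $i\in H_1$, so Line~4 has opened a $2$-\smoothcounter for $i$ no later than the moment $i$ first appeared in any $H_j$.

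\medskip
\noindent\emph{Step 2: the counter covers the full window.} The \smoothcounter for element $i$ is itself a smooth histogram built on the $(\eps,\eps)$-smooth count function restricted to occurrences of $i$. Once it has been instantiated, it tracks occurrences of $i$ in the sliding window with a $2$-approximation even if it was instantiated partway through the stream, because it stores timestamps of past occurrences in its own smooth-histogram buckets. Hence \thmref{thm:smooth:counter} gives $\hat f_i\ge f_i(W)/2$ at the current time.

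\medskip
\noindent\emph{Step 3: compare against $\hat\ell_2$.} The value $\hat\ell_2$ returned in Line~5 is an output of \estimator on $A_1$ at parameter $1/2$, so $\hat\ell_2\le c_2\,\ell_2(A_1)\le c_1c_2\,\ell_2(W)$ for an explicit small constant $c_2$. Choosing the \estimator and smooth-histogram constants so that $c_1c_2\le 2$ gives
\[
\hat f_i \ge \tfrac{1}{2}f_i(W) > \tfrac{\eps}{2}\,\ell_2(W) \ge \tfrac{\eps}{2}\cdot\tfrac{\hat\ell_2}{c_1c_2} \ge \tfrac{\eps}{4}\,\hat\ell_2,
\]
so $i$ passes the Line~7 threshold and is output, as required.

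\medskip
\noindent\emph{Main obstacle.} The chief bookkeeping hurdle is balancing the multiplicative slacks at each stage: the \estimator error ($1/2$), the smooth-histogram smoothness constant for $\ell_2$, the \bptree ``$\eps/16$'' slack, and the $2$-\smoothcounter error must compose so that the true ratio $\eps$ is preserved up to a factor of $4$ against the output threshold $\eps/4$. The secondary subtlety is justifying that the \smoothcounter started at the first detection time of $i$ still produces a $2$-approximation of $f_i(W)$; the key observation is that \smoothcounter internally maintains a smooth histogram on the (trivially $(\eps,\eps)$-smooth) count function and so, combined with a standard monotonicity argument, its output is a valid $2$-approximation regardless of how late in the stream it was instantiated, provided the instantiation happened no later than the oldest unexpired occurrence of $i$ in $W$ — which is exactly what Step~1 certifies via $i\in H_1$.
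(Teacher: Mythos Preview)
Your Step~2 contains a real gap. You assert that the \smoothcounter{} for $i$, once instantiated, produces $\hat f_i\ge f_i(W)/2$, and you justify this by claiming that ``the instantiation happened no later than the oldest unexpired occurrence of $i$ in $W$ --- which is exactly what Step~1 certifies via $i\in H_1$.'' But Step~1 certifies no such thing. Membership $i\in H_1$ only says that \bptree{} (running on $A_1$) \emph{eventually} reports $i$; it says nothing about \emph{when}. \bptree{} reports $i$ only once $i$ has accumulated roughly $(\eps/16)\hat\ell_2(A_1)$ mass within $A_1$, and those occurrences may all lie inside $W$. The counter is opened at that detection time and cannot see anything earlier --- your claim that it ``stores timestamps of past occurrences'' is simply false for occurrences predating its own instantiation.

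The paper's proof closes exactly this hole by a quantitative bound on the missed mass: at most $2\cdot(\eps/16)\hat\ell_2(A_1)\le(\eps/8)\hat\ell_2(A_1)$ instances of $i$ arrive before \bptree{} fires, so the counter still witnesses at least $(\eps/2-\eps/8)\hat\ell_2(A_1)\ge(\eps/4)\hat\ell_2(A_1)$ instances, which is enough to pass the Line~7 threshold. This is the essential content of the lemma and is why the threshold in \bptree{} is $\eps/16$ rather than some larger constant times $\eps$. Your Step~3 arithmetic, which relies on $\hat f_i\ge f_i(W)/2$, does not go through without this argument.
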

\begin{proof}
Since the $\ell_2$ norm is a smooth function, and so there exists a smooth-histogram which is an $\left(\frac{1}{2},\frac{\delta}{2}\right)$-estimation of the $\ell_2$ norm of the sliding window by \thmref{thm:main}. 
Thus, $\frac{1}{2}\hat{\ell}_2(A_1)\le \ell_2(W)\le\frac{3}{2}\hat{\ell}_2(A_1)$. 
With probability $1-\frac{\delta}{2}$, any element $i$ whose frequency satisfies $f_i(W)\ge\eps \ell_2(W)$ must have $f_i(W)\ge\eps \ell_2(W)\ge\frac{1}{2}\eps\hat{\ell}_2(A_1)$ and is reported by $\left(\frac{\eps}{16},\frac{\delta}{2}\right)-\bptree$ in \stepref{step:three}.

Since $\bptree$ is instantiated along with $A_1$, the sliding window may begin either before or after $\bptree$ reports each heavy hitter. 
If the sliding window begins after the heavy hitter is reported, then all $f_i(W)$ instances are counted by $\smoothcounter$. 
Thus, the count of $f_i$ estimated by $\smoothcounter$ is at least $f_i(W)\ge\eps \ell_2(W)\ge\frac{1}{2}\eps\hat{\ell}_2(A_1)$, and so \stepref{step:seven} will output $i$.

On the other hand, the sliding window may begin before the heavy hitter is reported.
Recall that the $\bptree$ algorithm identifies and reports an element when it becomes an $\frac{\eps}{16}$-heavy hitter with respect to the estimate of $\ell_2$. 
Hence, there are at most $2\cdot\frac{\eps}{16}\hat{\ell}_2(A_1)\le\frac{1}{8}\eps\hat{\ell}_2(A_1)$ instances of an element appearing in the active window before it is reported by $\bptree$. 
Since $f_i(W)\ge\eps \ell_2(W)\ge\frac{1}{2}\eps\hat{\ell}_2(A_1)$, any element $i$ whose frequency satisfies $f_i(W)\ge\eps \ell_2(W)$ must have
$f_i(W)\ge\frac{\eps}{2}\hat{\ell}_2(A_1)$ and therefore must have at least $\left(\frac{1}{2}-\frac{1}{8}\right)\eps\hat{\ell}_2(A_1)\ge\frac{1}{4}\eps\hat{\ell}_2(A_1)$ instances appearing in the stream after it is reported by $\bptree$. 
Thus, the count of $f_i$ estimated by $\smoothcounter$ is at least $\frac{1}{4}\eps\hat{\ell}_2(A_1)$, and so \stepref{step:seven} will output $i$.
\end{proof}

\begin{lemma}
\lemlab{lem:reject}
No element $i$ with frequency $f_i<\frac{\eps}{12}\ell_2(W)$ is output by \algref{alg:sliding:l2}.
\end{lemma}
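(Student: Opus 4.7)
The approach is to argue the contrapositive: assuming that element $i$ is output by \algref{alg:sliding:l2}, I will show $f_i(W) \geq \frac{\eps}{12}\ell_2(W)$. By \stepref{step:seven}, being output means $\hat{f}_i \geq \frac{1}{4}\eps\hat{\ell}_2$, so the proof reduces to (i) an upper bound on $\hat{f}_i$ in terms of $f_i(W)$, and (ii) a lower bound on $\hat{\ell}_2$ in terms of $\ell_2(W)$. Chaining these two inequalities will yield the claim directly.

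For (i), I invoke the $2$-\smoothcounter{} guarantee from \thmref{thm:smooth:counter} with approximation parameter $\eps=1$: conditioned on the counter for $i$ succeeding, its output satisfies $\hat{f}_i \leq 2 f_i(W)$. Note that if \smoothcounter{} was instantiated after $W$ began, then the elements it observes form a subset of $W$ and the bound only tightens. For (ii), exactly as in the proof of \lemref{lem:accept}, the smoothness of the $\ell_2$ norm combined with \thmref{thm:main} applied to the $\left(\tfrac{1}{2},\O{\delta/\log^6 n}\right)$-\estimator{} used in \stepref{step:one} yields $\tfrac{1}{2}\hat{\ell}_2 \leq \ell_2(W) \leq \tfrac{3}{2}\hat{\ell}_2$, and in particular $\hat{\ell}_2 \geq \tfrac{2}{3}\ell_2(W)$.

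Chaining these bounds under the assumption that $i$ is output gives
\[
2 f_i(W) \;\geq\; \hat{f}_i \;\geq\; \tfrac{1}{4}\eps\,\hat{\ell}_2 \;\geq\; \tfrac{1}{4}\eps \cdot \tfrac{2}{3}\ell_2(W) \;=\; \tfrac{\eps}{6}\ell_2(W),
\]
which rearranges to $f_i(W) \geq \tfrac{\eps}{12}\ell_2(W)$, establishing the contrapositive. The one remaining point is a union bound over the probabilistic events invoked: \estimator{} succeeds with failure probability $\O{\delta/\log^6 n}$, \bptree{} with failure probability $\delta/2$, and each of the $\O{1/\eps^2}$ \smoothcounter{} instances is tuned to fail with probability $\O{\delta\eps^2}$. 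I expect the main care — rather than any genuinely difficult argument — to be this routine bookkeeping to confirm that all relevant events hold simultaneously with probability at least $1-\delta$, so that both \lemref{lem:accept} and \lemref{lem:reject} hold jointly and combine to yield \thmref{thm:sliding:l2}.
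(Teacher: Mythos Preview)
Your proof is correct and follows essentially the same approach as the paper: both argue the contrapositive by chaining $\hat{f}_i \le 2 f_i(W)$ from \smoothcounter{}, the output condition $\hat{f}_i \ge \tfrac{1}{4}\eps\hat{\ell}_2$, and $\hat{\ell}_2 \ge \tfrac{2}{3}\ell_2(W)$ from \estimator{} to obtain $f_i(W) \ge \tfrac{\eps}{12}\ell_2(W)$. Your additional remarks on the \smoothcounter{} instantiation time and the union-bound bookkeeping are sound elaborations that the paper leaves implicit.
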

\begin{proof}
If $i$ is output by \stepref{step:seven}, then $\hat{f}_i\ge\frac{1}{4}\eps\hat{\ell}_2(A_1)$. 
By the properties of $\smoothcounter$ and $\estimator$, $f_i(W)\ge\frac{\hat{f}_i}{2}\ge\frac{1}{8}\eps\hat{\ell}_2(A_1)\ge\frac{1}{12}\ell_2(W)$, where the last inequality comes from the fact that $\ell_2(W)\le\frac{3}{2}\hat{\ell}_2(A_1)$.
\end{proof}

\begin{theorem}
\thmlab{thm:sliding:l2}
Given $\eps,\delta>0$, there exists an algorithm in the sliding window model (\algref{alg:sliding:l2}) that with probability at least $1-\delta$ outputs all indices $i\in[m]$ for which $f_i\ge\eps\sqrt{F_2}$, and reports no indices $i\in[m]$ for which $f_i\le\frac{\eps}{12}\sqrt{F_2}$. 
The algorithm has space complexity (in bits) $\O{\frac{1}{\eps^2}\log^3 n\left(\log\log n+\log\frac{1}{\eps}\right)}$.
\end{theorem}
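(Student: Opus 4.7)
The plan is to combine Lemmas lem:accept and lem:reject for correctness and then account carefully for the space used by each component of Algorithm alg:sliding:l2. Correctness follows immediately: union-bounding the failure events of the $\ell_2$-estimator of Step 1, the $(\eps/16, \delta/2)$-BPTree instances of Step 3, and the SmoothCounters of Step 4 gives that every sub-procedure produces a valid output with probability at least $1-\delta$; under this good event, Lemma lem:accept guarantees that every $i$ with $f_i \ge \eps\sqrt{F_2}$ is reported, while Lemma lem:reject rules out any $i$ with $f_i \le \frac{\eps}{12}\sqrt{F_2}$.

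The main work is the space analysis, which I would break into three pieces. For Step 1, observe that $\ell_2$ is $(1/2, 1/8)$-smooth, so Theorem thm:main produces $O(\log n)$ merged sketches $A_i$. Each underlying Estimator instance is run with error $1/2$ and failure probability $O(\delta/\log^6 n)$, which by Theorem thm:l2 uses $O(\log n \log^2\log n)$ bits per instance, for $O(\log^2 n \log^2\log n)$ bits in total. The polylogarithmic amplification only adds a $\log\log n$ factor to per-instance space rather than a $\log n$ factor, thanks to the observation proved in Appendix app that only $\polylog n$ of the Estimator instances need to be simultaneously correct to drive the merging rule of Algorithm alg:simple.

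For Step 3, I would instantiate one $(\eps/16, \Theta(\delta/\log n))$-$\bptree$ per merged sketch, for $O(\log n)$ BPTrees total; each uses $O(\frac{1}{\eps^2}(\log\log n + \log\frac{1}{\eps})\log n)$ bits by Theorem thm:bptree, yielding $O(\frac{1}{\eps^2}\log^2 n(\log\log n + \log\frac{1}{\eps}))$ bits overall. For Step 4, the $O(1/\eps^2)$ items reported in $H_1$ are each tracked by a $2$-$\smoothcounter$ of $O(\log^2 n)$ bits by Theorem thm:smooth:counter, contributing $O(\frac{1}{\eps^2}\log^2 n)$ bits. Summing the three pieces and using $\log\log n \le \log^2\log n$ recovers the claimed bound $O(\frac{1}{\eps^2}\log^2 n(\log^2\log n + \log\frac{1}{\eps}))$.

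The chief subtlety I expect is ensuring that the probability amplification in Step 1 does not contribute a full $\log n$ factor. A naive strong-tracking argument would force each Estimator instance to succeed with probability $1-\Theta(\delta/n)$, multiplying its per-instance space by $\log n$ rather than $\log\log n$ and inflating the final bound to roughly $O(\frac{1}{\eps^2}\log^3 n)$. Establishing the appendix reduction, namely that only polylogarithmically many Estimator instances must be correct, is therefore the technical crux; once it is in hand, the remaining bookkeeping is a direct sum of the three contributions above.
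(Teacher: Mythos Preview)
Your proposal is correct and follows essentially the same argument as the paper: correctness from \lemref{lem:accept} and \lemref{lem:reject}, followed by summing the space of the three components (the histogram of $\estimator$ instances in Step~1, the $\O{\log n}$ $\bptree$ instances in Step~3, and the $\O{1/\eps^2}$ $\smoothcounter$ instances in Step~4), with the Appendix argument justifying the $\log\log n$ rather than $\log n$ amplification in Step~1. If anything you are slightly more careful than the paper in dividing the $\bptree$ failure probability by $\log n$ to union-bound over all buckets, whereas the paper as written uses $\delta/2$ per instance; this does not change the asymptotic bound.
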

\begin{proof}
By \lemref{lem:accept} and \lemref{lem:reject}, \algref{alg:sliding:l2} outputs all elements with frequency at least $\eps\ell_2(W)$ and no elements with frequency less than $\frac{\eps}{12}\ell_2(W)$. 
We now proceed to analyze the space complexity of the algorithm. 
\stepref{step:one} uses \algref{alg:simple} in conjunction with the $\estimator$ routine to maintain a $\frac{1}{2}$-approximation to the $\ell_2$-norm of the sliding window. 
By requiring the probability of failure to be $\O{\frac{\delta}{n^4}}$ in \thmref{thm:l2} and observing that $\beta=\O{1}$ in \thmref{thm:main} suffices for a $\frac{1}{2}$-approximation, it follows that \stepref{step:one} uses $\O{\log n(\log n+\log^2 m\log\log m)}$ bits of space. 
Since \stepref{step:three} runs an instance of $\bptree$ for each of the at most $\O{\log n}$ buckets, then by \thmref{thm:bptree}, it uses $\O{\frac{1}{\eps^2}\left(\log\frac{1}{\delta\eps}\right)\log n(\log n+\log m)}$ bits of space.

Notice that $\bptree$ returns a list of $\O{\frac{1}{\eps^2}}$ elements, by \thmref{thm:bptree}. 
By running $\smoothcounter$ for each of these, \stepref{step:seven} provides a $2$-approximation to the frequency of each element after being returned by $\bptree$. 
By \thmref{thm:smooth:counter}, \stepref{step:seven} has space complexity (in bits) $\O{\frac{1}{\eps^2}(\log n+\log m)\log n}$. 
Assuming $\log m=\O{\log n}$, the algorithm uses $\O{\frac{1}{\eps^2}\log^3 n\left(\log\log n+\log\frac{1}{\eps}\right)}$ bits of space.
\end{proof}

\subsection{Extension to $\ell_p$ norms for $0<p<2$}
To output a superset of the $\ell_p$-heavy hitters rather than the $\ell_2$-heavy hitters, recall that an algorithm provides the $(\eps,k)$-tail guarantee if the frequency estimate $\hat{f_i}$ for each heavy hitter $i\in[m]$ satisfies $|\hat{f_i}-f_i|\le\eps\cdot||f_{tail(k)}||_1$, where $f_{tail(k)}$ is the frequency vector $f$ in which the $k$ most frequent entries have been replaced by zero. 
Jowhari \etal\,\cite{JowhariST11} show the impact of $\ell_2$-heavy hitter algorithms that satisfy the tail guarantee.
\begin{lemma}\cite{JowhariST11}
\lemlab{lem:tail}
For any $p\in(0,2]$, any algorithm that returns the $\eps^{p/2}$-heavy hitters for $\ell_2$ satisfying the tail guarantee also finds the $\eps$-heavy hitters for $\ell_p$.
\end{lemma}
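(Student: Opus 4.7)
The plan is to chain two ingredients: a comparison between $\|f_{tail(k)}\|_2$ and $\|f\|_p$ for the right choice of $k$, and the precision on frequency estimates provided by the tail guarantee. I would set $k = \lceil 1/\eps^p \rceil$ and assume the underlying $\ell_2$-heavy-hitters algorithm is invoked with parameter $\eps^{p/2}$ and this tail size.

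First I would prove $\|f_{tail(k)}\|_2 \le \eps^{1-p/2}\|f\|_p$. The starting observation is that at most $k$ coordinates of $f$ can have magnitude exceeding $\eps\|f\|_p$, since otherwise those entries alone would contribute more than $k\cdot\eps^p\|f\|_p^p=\|f\|_p^p$ to $\|f\|_p^p$, a contradiction. Thus every entry of $f_{tail(k)}$ is at most $\eps\|f\|_p$ in absolute value, and factoring $f_j^2=f_j^{2-p}\cdot f_j^p$ (valid because $0<p\le 2$) yields
\[\|f_{tail(k)}\|_2^2 \le (\eps\|f\|_p)^{2-p}\sum_j f_j^p \le \eps^{2-p}\|f\|_p^2,\]
which gives the claimed inequality after taking square roots.

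Next I would convert the tail guarantee into a precision bound phrased in $\|f\|_p$. For each index $i$ returned by the algorithm, the estimate $\hat f_i$ satisfies
\[|\hat f_i - f_i| \le \eps^{p/2}\|f_{tail(k)}\|_2 \le \eps^{p/2}\cdot\eps^{1-p/2}\|f\|_p = \eps\|f\|_p,\]
which is exactly the precision required to distinguish $\ell_p$-$\eps$-heavy hitters via a threshold on $\hat f_i$. The post-processing is then to report every returned index whose estimate exceeds roughly $\tfrac{1}{2}\eps\|f\|_p$ (using any constant-factor estimate of $\|f\|_p$ when needed), which both excludes every non-heavy-hitter and includes every true $\eps$-heavy hitter present in the candidate set.

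The remaining step, and the main obstacle, is to verify that every $\ell_p$-$\eps$-heavy hitter is among the indices for which the algorithm produces a tail-guaranteed estimate. Using $\|f\|_p\ge\|f\|_2$ for $p\le 2$, any such $i$ satisfies $f_i\ge\eps\|f\|_2$, and since there are at most $k=1/\eps^p$ indices exceeding the $\ell_p$ threshold they must lie among the top-$k$ coordinates of $|f|$. The difficulty is that the literal $\eps^{p/2}$-$\ell_2$-heavy-hitter threshold $\eps^{p/2}\|f\|_2$ need not dominate $\eps\|f\|_p$, so the inclusion of the $\ell_p$ heavy hitters in the returned set cannot be read off the $\ell_2$ threshold alone and must instead be justified through the structure of the tail-guarantee algorithm, which by its very construction supplies estimates on the top-$k$ candidates.
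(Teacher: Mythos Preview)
The paper does not prove this lemma; it is quoted from \cite{JowhariST11} and used as a black box. So there is no in-paper argument to compare against, and your sketch is essentially the standard proof of the reduction.

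Your two main steps are correct. The inequality $\|f_{tail(k)}\|_2 \le \eps^{1-p/2}\|f\|_p$ with $k=\lceil 1/\eps^p\rceil$ is exactly the right quantitative fact, and your derivation via $f_j^2 = f_j^{2-p}f_j^p$ together with the bound $|f_j|\le \eps\|f\|_p$ on tail coordinates is the clean way to get it. Combining it with the $(\eps^{p/2},k)$-tail guarantee to obtain $|\hat f_i - f_i|\le \eps\|f\|_p$ is also correct. One small remark: the paper's formal definition of the tail guarantee is written with $\|f_{tail(k)}\|_1$, but for $\ell_2$-heavy-hitter algorithms the intended guarantee is with $\|f_{tail(k)}\|_2$, which is what you (correctly) use and what \bptree{} actually delivers.

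The gap you flag in the last paragraph is real, and your resolution is the right idea but should be stated more sharply. It is \emph{not} true in general that $f_i\ge \eps\|f\|_p$ implies $f_i\ge \eps^{p/2}\|f\|_2$, so you cannot conclude inclusion from the raw $\ell_2$ threshold. What does hold is $f_i \ge \eps\|f\|_p \ge \eps^{p/2}\|f_{tail(k)}\|_2$, using your norm inequality. The way this closes the argument is that an $\ell_2$ algorithm satisfying the tail guarantee is, in the sense of \cite{JowhariST11}, required to return every index with $f_i \ge \eps^{p/2}\|f_{tail(k)}\|_2$ (equivalently, it produces tail-accurate estimates for every coordinate, as CountSketch and \bptree{} do, so one can simply threshold). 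If you make that the working hypothesis rather than the weaker ``returns every $i$ with $f_i\ge \eps^{p/2}\|f\|_2$'', the inclusion step is immediate and the proof is complete. Your appeal to ``the structure of the tail-guarantee algorithm'' is pointing at exactly this, but it should be said explicitly rather than left as an assertion about unnamed internal structure.
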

The correctness of \thmref{thm:sliding:lp} immediately follows from \lemref{lem:tail} and \thmref{thm:sliding:l2}.
\begin{proofof}{\thmref{thm:sliding:lp}}
By \thmref{thm:bptree}, $\bptree$ satisfies the tail guarantee. 
Therefore by \lemref{lem:tail}, it suffices to analyze the space complexity of finding the $\eps^{p/2}$-heavy hitters for $\ell_2$. 
By \thmref{thm:sliding:l2}, there exists an algorithm that uses $\O{\frac{1}{\eps^2}\log^3 n\left(\log\log n+\log\frac{1}{\eps}\right)}$ bits of space to find the $\eps$-heavy hitters for $\ell_2$. 
Hence, there exists an algorithm that uses $\O{\frac{1}{\eps^p}\log^3 n\left(\log\log n+\log\frac{1}{\eps}\right)}$ bits of space to find the $\eps$-heavy hitters for $\ell_p$, where $0<p\le 2$.
\end{proofof}

\section{Lower Bounds}
\seclab{sec:lb}
\subsection{Distinct Elements}
To show a lower bound of $\Omega\left(\frac{1}{\eps}\log^2 n+\frac{1}{\eps^2}\log n\right)$ for the distinct elements problem, we show in \thmref{thm:de:lb:first} a lower bound of $\Omega\left(\frac{1}{\eps}\log^2 n\right)$ and we show in \thmref{thm:de:lb:second} a lower bound of $\Omega\left(\frac{1}{\eps^2}\log n\right)$. 
We first obtain a lower bound of $\Omega\left(\frac{1}{\eps}\log^2 n\right)$ by a reduction from the \ig{} problem.
\figlb
\begin{definition}
In the \ig{} problem, Alice is given a string $S=x_1x_2\cdots x_{m}$ of length $mn$, and thus each $x_i$ has $n$ bits.
Bob is given integers $i\in[m]$ and $j\in[2^n]$. 
Alice is allowed to send a message to Bob, who must then determine whether $x_i>j$ or $x_i\le j$.
\end{definition}
Given an instance of the \ig{} problem, Alice first splits the data stream into blocks of size $\O{\frac{\eps n}{\log n}}$. 
She further splits each block into $\sqrt{n}$ pieces of length $(1+2\eps)^k$, before padding the remainder of block $(\ell-k+1)$ with zeros. 
To encode $x_i$ for each $i\in[m]$, Alice inserts the elements $\{0,1,\ldots,(1+2\eps)^k-1\}$ into piece $x_i$ of block $(\ell-i+1)$, before padding the remainder of block $(\ell-k+1)$ with zeros. 
In this manner, the number of distinct elements in each block dominates the number of distinct elements in the subsequent blocks. 
Moreover, the location of the distinct elements in block $(\ell-i+1)$ encodes $x_i$, so that Bob can compare $x_i$ to $j$.

\begin{lemma}
\lemlab{lem:ig}
The one-way communication complexity of \ig{} is $\Omega(nm)$ bits.
\end{lemma}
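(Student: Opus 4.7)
The plan is to reduce from the \augind{} problem. Recall that in \augind, Alice holds a string $z\in\{0,1\}^N$, Bob holds an index $k\in[N]$ together with the prefix $z[1..k-1]$, and Bob must output $z_k$. It is a classical result (Miltersen, Nisan, Safra, Wigderson) that the one-way randomized communication complexity of \augind{} is $\Omega(N)$ even for constant error probability. Taking $N=mn$, it therefore suffices to simulate an \augind{} protocol on $mn$ bits using a single one-way \ig{} protocol on $(m,n)$ inputs.

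The reduction is as follows. Alice views her $mn$-bit string $z$ as $m$ consecutive blocks $x_1,\ldots,x_m$ of $n$ bits each, interprets each $x_i$ as an integer in $[2^n]$, and runs the \ig{} protocol on $(x_1,\ldots,x_m)$ by sending its transcript to Bob. Given his index $k$ and prefix $z[1..k-1]$, Bob writes $k=(i-1)n+b$, so that his target bit is the $b$-th bit of $x_i$. He extracts the first $b-1$ bits of $x_i$ from the prefix, interpreted as an integer $c$ with $0\le c<2^{b-1}$, and then sets
\[
j \;=\; c\cdot 2^{n-b+1}+2^{n-b}-1.
\]
He queries the \ig{} protocol on $(i,j)$ and outputs the returned bit.

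For correctness, since the first $b-1$ bits of $x_i$ equal $c$, we can write $x_i=c\cdot 2^{n-b+1}+r$ with $0\le r<2^{n-b+1}$. Then $x_i>j$ holds iff $r\ge 2^{n-b}$, which occurs iff the $b$-th bit of $x_i$ is $1$. Hence the \ig{} answer is exactly $z_k$, and any one-way \ig{} protocol with success probability $2/3$ yields a one-way \augind{} protocol on $mn$ bits with the same success probability and message length. The $\Omega(N)=\Omega(nm)$ lower bound for \augind{} then transfers to \ig.

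The only real step to get right is choosing the threshold $j$ so that the single comparison $[x_i>j]$ isolates the $b$-th bit assuming the prefix of $x_i$ is fixed; after this observation, the reduction is essentially mechanical and the lower bound follows directly from the \augind{} lower bound.
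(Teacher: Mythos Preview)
Your proof is correct and follows essentially the same approach as the paper: both reduce from \augind{} on $nm$ bits by having Alice partition her string into $m$ blocks of $n$ bits and having Bob use his known prefix within a block to build a threshold whose comparison with $x_i$ reveals the unknown bit. Your explicit threshold $j=c\cdot 2^{n-b+1}+2^{n-b}-1$ is precisely the integer value of the paper's string $w=(\text{known prefix})\circ 0\circ 1\cdots 1$, so the two arguments coincide.
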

\begin{proof}
We show the communication complexity of \ig{} through a reduction from the \augind{} problem. 
Suppose Alice is given a string $S\in\{0,1\}^{nm}$ and Bob is given an index $i$ along with the bits $S[1], S[2],\ldots, S[i-1]$. 
Then Bob's task in the \augind{} problem is to determine $S[i]$. 

Observe that Alice can form the string $T=x_1x_2\cdots x_{m}$ of length $mn$, where each $x_k$ has $n$ bits of $S$. 
Alice can then use the \ig{} protocol and communicate to Bob a message that will solve the \ig{} problem. 
Let $j=\left\lfloor\frac{i}{n}\right\rfloor$ so that the symbol $S[i]$ is a bit inside $x_{j+1}$. 
Then Bob constructs the string $w$ by first concatenating the bits $S[jn+1], S[jn+2],\ldots,S[i-1]$, which he is given from the \augind{} problem. 
Bob then appends a zero to $w$, and pads $w$ with ones at the end, until $w$ reaches $n$ bits:
\[w=S[jn+1]\circ S[jn+2]\circ\cdots\circ S[i-1]\circ 0\circ\underbrace{1\circ1\circ\cdots\circ1}_{\text{until $w$ has $n$ bits}}.\]
Bob takes the message from Alice and runs the \ig{} protocol to determine whether $x_j>w$. 
Observe that by construction $x_j>w$ if and only if $S[i]=1$. 
Thus, if the \ig{} protocol succeeds, then Bob will have solved the \augind{} problem, which requires communication complexity $\Omega(nm)$ bits.  
Hence, the communication complexity of \ig{} follows.
\end{proof}

\newcommand{\thmdelbfirst}
{Let $p>0$ and $\eps,\delta \in(0,1)$. 
Any one-pass streaming algorithm that returns a $(1+\eps)$-approximation to the number of distinct elements in the sliding window model with probability $\frac{2}{3}$ requires $\Omega\left(\frac{1}{\eps}\log^2 n\right)$ space.}
\begin{theorem}
\thmlab{thm:de:lb:first}
\thmdelbfirst
\end{theorem}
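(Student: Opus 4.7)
The plan is to reduce from \ig{} and invoke \lemref{lem:ig}. We parametrize the \ig{} instance so that Alice's input comprises $m = \frac{1}{6\eps}\log n$ symbols of $\frac{1}{2}\log n$ bits each; the one-way communication cost of \ig{} is then $\Omega\!\left(m \cdot \frac{1}{2}\log n\right) = \Omega\!\left(\frac{1}{\eps}\log^2 n\right)$ bits.

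Given the \ig{} input $(S = x_1 \cdots x_m, i, j)$, Alice constructs the length-$n$ stream of \figref{fig:lb}. She partitions the stream into $m$ consecutive blocks of length $\frac{6\eps n}{\log n}$, and subdivides each block $k$ into $\sqrt{n}$ contiguous pieces of length $(1+2\eps)^k$, padding unused positions with a fixed dummy symbol. Writing $\ell = m$, she encodes $x_i$ by inserting $(1+2\eps)^{\ell-i+1}$ fresh distinct elements --- namely $\{0, 1, \ldots, (1+2\eps)^{\ell-i+1}-1\}$, drawn from a universe disjoint across blocks --- into piece $x_i$ of block $\ell-i+1$. The blocks are laid out in stream order so that the block with the largest pieces is oldest and piece sizes decrease geometrically moving forward in the stream.

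Alice runs the putative sliding-window algorithm on the constructed stream and ships its state to Bob, who also knows $i$ and $j$. Bob appends copies of a single dummy symbol to the stream, padding by just enough positions to expire blocks $\ell, \ell-1, \ldots, \ell-i+2$ together with the first $j$ pieces of block $\ell-i+1$. The sliding window now contains pieces $j+1, \ldots, \sqrt{n}$ of block $\ell-i+1$ followed by blocks $\ell-i, \ldots, 1$ plus the dummy padding. If $x_i \le j$ then the filled piece has already expired, so block $\ell-i+1$ contributes $0$ to the distinct count; if $x_i > j$ then the filled piece is still in the window and contributes an extra $(1+2\eps)^{\ell-i+1}$ distinct elements. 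Since the dummy adds only one element and the total contribution from blocks $\ell-i, \ldots, 1$ is a geometric sum bounded by $(1+2\eps)^{\ell-i+1}/(2\eps)$, the geometric ratio $(1+2\eps)$ is tuned so that the multiplicative gap between the ``present'' and ``absent'' cases strictly exceeds the $(1+\eps)$-approximator's $\frac{1+\eps}{1-\eps}$ window. A single query therefore lets Bob decide $x_i > j$, and his one-way protocol succeeds with the approximator's constant probability.

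The main obstacle is the quantitative accounting: one must verify that the parameters $m = \frac{1}{6\eps}\log n$, block length $\frac{6\eps n}{\log n}$, and piece ratio $(1+2\eps)$ are simultaneously compatible with (i) a stream of total length at most $n$, (ii) a multiplicative separation between ``present'' and ``absent'' that strictly exceeds $(1+\eps)^2$ so that a single query suffices and no union bound over multiple query times is needed, and (iii) a universe of size polynomial in $n$. Once this routine but delicate accounting is in place, any streaming algorithm using $s$ bits of space induces a one-way \ig{} protocol of length $s$, so \lemref{lem:ig} forces $s = \Omega\!\left(\frac{1}{\eps}\log^2 n\right)$, completing the proof.
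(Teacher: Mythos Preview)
Your proposal follows essentially the same route as the paper: a reduction from \ig{} with $m=\Theta(\tfrac{1}{\eps}\log n)$ symbols of $\Theta(\log n)$ bits, a partition of the window into blocks whose pieces grow geometrically by a $(1+\Theta(\eps))$ factor, and a single distinct-elements query by Bob after expiring the appropriate prefix. The only cosmetic difference is that you place each block's inserted elements in a \emph{disjoint} sub-universe, whereas the paper reuses the same elements $\{0,\ldots,(1+2\eps)^k-1\}$ in every block; with the shared universe the ``absent'' count is exactly $(1+2\eps)^{i-1}$ rather than your geometric sum $\sum_{k<i}(1+2\eps)^k\approx (1+2\eps)^i/(2\eps)$, but in both cases the present/absent ratio is $1+\Theta(\eps)$ and the argument goes through (modulo enlarging the constant $2$ in $(1+2\eps)$ so the gap strictly exceeds the approximator's two-sided slack---a tweak needed in the paper's write-up as well, and which you correctly flag under ``quantitative accounting'').
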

\begin{proof}
We reduce a one-way communication protocol for \ig{} to finding a $(1+\eps)$-approximation to the number of distinct elements in the sliding window model. 

Let $n$ be the length of the sliding window and suppose Alice receives a string $S=x_1x_2\ldots x_{\ell}\in\{0,1\}^{\ell}$, where $\ell=\frac{1}{6\eps}\log n$ and each $x_k$ has $\frac{1}{2}\log n$ bits. 
Bob receives an index $i\in[\ell]$ and an integer $j\in[\sqrt{n}]$. 
Suppose Alice partitions the sliding window into $\ell$ blocks, each of length $\frac{n}{\ell}=\frac{6\eps n}{\log n}$. 
For each $1\le k\le\frac{1}{6\eps}\log n$, she further splits block $(\ell-k+1)$ into $\sqrt{n}$ pieces of length $(1+2\eps)^k$, before padding the remainder of block $(\ell-k+1)$ with zeros. 
Moreover, for piece $x_k$ of block $(\ell-k+1)$, Alice inserts the elements $\{0,1,\ldots,(1+2\eps)^k-1\}$, before padding the remainder of block $(\ell-k+1)$ with zeros. 
Hence, the sliding window contains all zeros, with the exception of the elements $\{0,1,\ldots,(1+2\eps)^k-1\}$ appearing in piece $x_k$ of block $(\ell-k+1)$ for all $1\le k\le\ell=\frac{1}{6\eps}\log n$. 
Note that $(1+2\eps)^k\le\sqrt[3]{n}$ and $x_k\le\sqrt{n}$ for all $k$, so all the elements fit within each block, which has length $\frac{6\eps n}{\log n}$. 
Finally, Alice runs the $(1+\eps)$-approximation distinct elements sliding window algorithm and passes the state to Bob.
See \figref{fig:lb} for an example of Alice's construction.

Given integers $i\in[\ell]$ and $j\in[\sqrt{n}]$, Bob must determine if $x_i>j$. 
Thus, Bob is interested in $x_i$, so he takes the state of the sliding window algorithm, and inserts a number of zeros to expire each block before block $i$. 
Note that since Alice reversed the stream in her final step, Bob can do this by inserting $(\ell-i)\left(\frac{1}{2}\log n\right)$ number of zeros. 
Bob then inserts $(j-1)(1+2\eps)^i$ additional zeros, to arrive at piece $j$ in block $i$. 
Since piece $x_i$ contains $(1+2\eps)^i$ distinct elements and the remainder of the stream contains $(1+2\eps)^{i-1}$ distinct elements, then the output of the algorithm will decrease below $\frac{(1+2\eps)^i}{1+\eps}$ during piece $x_i$. 
Hence, if the output is less than $\frac{(1+2\eps)^i}{1+\eps}$ after Bob arrives at piece $j$, then $x_i\le j$. 
Otherwise, if the output is at least $\frac{(1+2\eps)^i}{1+\eps}$, then $x_i>j$. 
By the communication complexity of \ig{} (\lemref{lem:ig}), this requires space $\Omega\left(\frac{1}{\eps}\log^2 n\right)$.
\end{proof}
To obtain a lower bound of $\Omega\left(\frac{1}{\eps^2}\log n\right)$, we give a reduction from the \gapham{} problem.
\begin{definition}
\cite{IndykW03}
In the \gapham{} problem, Alice and Bob receive $n$ bit strings $x$ and $y$, which have Hamming distance either at least $\frac{n}{2}+\sqrt{n}$ or at most $\frac{n}{2}-\sqrt{n}$. 
Then Alice and Bob must decide which of these instances is true.
\end{definition}
Chakrabarti and Regev show an optimal lower bound on the communication complexity of \gapham.
\begin{lemma}
\cite{ChakrabartiR12}
The communication complexity of \gapham{} is $\Omega(n)$.
\end{lemma}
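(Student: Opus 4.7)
The plan is to establish the $\Omega(n)$ lower bound via an information-theoretic direct-sum argument on a carefully chosen hard distribution, rather than via a reduction to an obviously simpler problem (none with the right gap structure is immediately available). Concretely, I would fix a product distribution $\mu^{\otimes n}$ over $(x,y)\in\{0,1\}^n\times\{0,1\}^n$ where each coordinate pair $(x_i,y_i)$ is drawn independently from one of two distributions $\mu_0$ and $\mu_1$, chosen so that under $\mu_0^{\otimes n}$ the Hamming distance concentrates at $n/2-\Omega(\sqrt n)$ and under $\mu_1^{\otimes n}$ it concentrates at $n/2+\Omega(\sqrt n)$. A hidden bit $B\in\{0,1\}$ selects between the two marginals, and distinguishing $B=0$ from $B=1$ with constant probability is then equivalent to solving \gapham{} on the hard distribution.

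Given such a distribution, the goal is to show that for any randomized protocol $\Pi$ succeeding with probability $2/3$ on $\gapham$, the expected transcript length satisfies $|\Pi|\geq I(X,Y;\Pi\mid B)=\Omega(n)$. I would try to decompose this information in the usual direct-sum fashion, bounding $I(X_i,Y_i;\Pi\mid B,X_{<i},Y_{<i})$ from below for each $i$ and summing. The naive per-coordinate bound, however, is only $\Omega(1/n)$, because $\mu_0$ and $\mu_1$ agree on marginals and differ by only $\Theta(1/\sqrt n)$ in correlation — summing gives merely $\Omega(1)$, not $\Omega(n)$.

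The main obstacle is therefore bridging this $\sqrt n$ loss, and this is where I expect the real work to lie. The approach I would take, following the spirit of Chakrabarti--Regev, is to replace the naive direct-sum with a \emph{smooth rectangle/corruption bound}: one shows that every rectangle $R=S\times T$ of measure at least $2^{-o(n)}$ under $\mu^{\otimes n}$ is forced to have a nontrivial probability mass on both sides of the gap, so no low-communication protocol can partition the input space into monochromatic rectangles consistent with a correct \gapham{} output. Proving this rectangle property requires a careful anti-concentration analysis of the Hamming distance conditioned on large product events, and it is this step — not the reduction or the setup — that carries the technical weight. Once the smooth corruption bound is in place, the $\Omega(n)$ communication bound follows from standard calculations relating communication to the log of the minimum monochromatic-rectangle measure.

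Finally, to reach the statement as phrased (which does not restrict the number of rounds), I would verify that the corruption/information framework applied above is genuinely round-oblivious — i.e., that the rectangle bound applies to arbitrary interactive protocols via the standard decomposition of two-way transcripts into products of rectangles. Success amplification from $2/3$ to any constant then costs only a constant factor and does not affect the asymptotic bound.
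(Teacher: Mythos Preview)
The paper does not prove this lemma at all: it is stated with a citation to \cite{ChakrabartiR12} and used as a black box in the reduction establishing \thmref{thm:de:lb:second}. There is therefore no ``paper's own proof'' to compare your proposal against. Your sketch is a reasonable high-level outline of the actual Chakrabarti--Regev argument (the smooth-rectangle/corruption bound is indeed the engine of their proof), but for the purposes of this paper no such argument is needed---one simply invokes the cited result.
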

Observe that a $(1+\eps)\frac{n}{2}\le\frac{n}{2}+\sqrt{n}$ for $\eps\le\frac{2}{\sqrt{n}}$ and thus a $(1+\eps)$-approximation can differentiate between at least $\frac{n}{2}+\sqrt{n}$ and at most $\frac{n}{2}-\sqrt{n}$. 
We use this idea to show a lower bound of $\Omega\left(\frac{1}{\eps^2}\log n\right)$ by embedding $\Omega(\log n)$ instances of \gapham{} into the stream. 

\newcommand{\thmdelbsecond}
{Let $p>0$ and $\eps,\delta \in(0,1)$. 
Any one-pass streaming algorithm that returns a $(1+\eps)$-approximation to the number of distinct elements in the sliding window model with probability $\frac{2}{3}$ requires $\Omega\left(\frac{1}{\eps^2}\log n\right)$ space for $\eps\le\frac{1}{\sqrt{n}}$.}
\begin{theorem}
\thmlab{thm:de:lb:second}
\thmdelbsecond
\end{theorem}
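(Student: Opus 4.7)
My plan is to reduce from an indexed direct-sum version of \gapham{}, closely mirroring the reduction for \thmref{thm:de:lb:first}. In this indexed version Alice holds $\ell=\Theta(\log n)$ instances $x^{(1)},\ldots,x^{(\ell)}\in\{0,1\}^m$, Bob holds $y^{(1)},\ldots,y^{(\ell)}$ together with an index $i^*\in[\ell]$, and his task is to solve the $i^*$-th \gapham{} instance. I would choose $m=\Theta(1/\eps^2)$ so that (since $\eps\le 1/\sqrt{n}$) the $\sqrt{m}$-gap of \gapham{} is distinguishable by a $(1+\eps)$-approximation, and by the $\Omega(m)$ one-way complexity of a single \gapham{} instance combined with a standard information-cost direct-sum argument (Alice does not know $i^*$), this problem requires $\Omega(\ell m)=\Omega((\log n)/\eps^2)$ bits of one-way communication.

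For the reduction, Alice would follow the template of \thmref{thm:de:lb:first}: she partitions her stream into $\ell$ blocks with geometrically growing scales, uses a disjoint universe $U_i$ for block $i$ partitioned into $m$ pieces of length $(1+2\eps)^i$, and encodes $x^{(i)}$ by inserting the $j$-th piece of $U_i$ whenever $x^{(i)}_j=1$. She lays out the blocks in stream order (with the same reversal trick from \thmref{thm:de:lb:first}) so that the oldest blocks have the largest scale, and then transmits the sliding-window DE sketch state. Bob, given the index $i^*$ of the instance he must answer, inserts the right number of zero dummies to expire exactly the blocks older than block $i^*$, appends his encoding of $y^{(i^*)}$ in $U_{i^*}$ using the same piece-by-piece scheme, and finally queries the sketch.

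By the geometric growth, block $i^*$'s contribution to the sliding window's distinct-element count, namely $(1+2\eps)^{i^*}\cdot|\{j:x^{(i^*)}_j=1\text{ or }y^{(i^*)}_j=1\}|$, is the dominant term after Bob's manipulations, so the $(1+\eps)$-approximation preserves the \gapham{} gap of order $(1+2\eps)^{i^*}\sqrt{m}$ for instance $i^*$ (the approximation error being an $\eps$-fraction of the total, which is smaller than this gap since $\eps\le 1/\sqrt{m}$). Thresholding the returned estimate then yields the \gapham{} answer, so any $S$-bit sliding-window DE algorithm is converted into a one-way protocol of $O(S)$ bits for the indexed direct-sum \gapham{} problem, forcing $S=\Omega((\log n)/\eps^2)$.

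The main obstacle will be the dominance analysis: one has to choose the geometric growth base and the instance size $m$ so that, simultaneously for every $i^*\in[\ell]$, the contribution of block $i^*$ to the distinct-element count after Bob's expiration step strictly dominates the aggregate contribution of the surviving blocks and any dummy padding by a margin exceeding the $(1+\eps)$-approximation error, while also ensuring that $\Omega(\log n)$ blocks can be isolated by Bob's dummy insertions within the sliding window of size $n$. This is a careful but mechanical calculation, analogous to but more delicate than the corresponding one in \thmref{thm:de:lb:first}, since the \gapham{} gap of $\sqrt{m}$ replaces the all-or-nothing count drop that was detected there.
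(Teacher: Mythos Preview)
Your high-level plan---embed $\Theta(\log n)$ \gapham{} instances at geometrically separated scales, expire down to one, and read off the answer from a $(1+\eps)$-approximate distinct-element count---is exactly the paper's strategy. But the specific parameters you import from \thmref{thm:de:lb:first} break the dominance step you yourself flag as the crux.

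The issue is the growth base $(1+2\eps)$ combined with disjoint universes $U_i$. In \thmref{thm:de:lb:first} the blocks share a single nested universe (block $k$ inserts $\{0,\dots,(1+2\eps)^k-1\}$), so the union over all surviving blocks is simply the largest one and dominance is trivial. In your scheme the $U_i$ are disjoint, so the contributions add. After Bob isolates block $i^*$, the surviving smaller blocks $1,\dots,i^*-1$ contribute roughly
\[
\frac{m}{2}\sum_{k<i^*}(1+2\eps)^k \;\approx\; \frac{m}{4\eps}(1+2\eps)^{i^*},
\]
which is a $\Theta(1/\eps)$ factor \emph{larger} than block $i^*$'s own $\Theta(m)(1+2\eps)^{i^*}$ contribution. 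Hence the $(1+\eps)$-approximation error is $\Theta(m)(1+2\eps)^{i^*}$, while the \gapham{} gap you are trying to detect is only $\Theta(\sqrt{m})(1+2\eps)^{i^*}$; the error swamps the signal unless $m=O(1)$. So your claim that block $i^*$ ``is the dominant term'' is false for this growth rate.

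The paper avoids this by using a constant growth factor (instance sizes $2^{2i}$, so a factor~$4$ between consecutive ones) and, crucially, by letting the instances themselves have varying lengths $2^{2i}$ rather than a fixed $m=\Theta(1/\eps^2)$. With a constant base the tail is only a constant fraction of the target, so the error is $\Theta(\eps)\cdot(\text{target})$ and the gap (set to $\Theta(\eps)\cdot 2^{2i}$ in the paper's hard distribution) is detectable. If you patch your construction to a constant base you will recover the paper's argument, but note that the number of blocks you can fit then becomes $\Theta(\log(n\eps^2))$ rather than $\Theta(\log n)$, so you should also revisit the claimed range of $\eps$.
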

\begin{proof}
We reduce a one-way communication protocol for the \gapham{} problem to finding a $(1+\eps)$-approximation to the number of distinct elements in the sliding window model. 
For each $\frac{\log\frac{1}{\eps}}{2}\le i\le\frac{\log n-1}{2}$, let $j=2i$ and $x_j$ and $y_j$ each have length $2^j$ and $(x_j,y_j)$ be drawn from a distribution such that with probability $\frac{1}{2}$, $\HAM{x_j,y_j}=(1+4\eps)2^{j-1}$ and otherwise (with probability $\frac{1}{2}$), $\HAM{x_j,y_j}=(1-4\eps)2^{j-1}$. 
Then Alice is given $\{x_j\}$ while Bob is given $\{y_j\}$ and needs to output $\HAM{x_j,y_j}$. 
For $\eps\le\frac{1}{\sqrt{n}}$, this is precisely the hard distribution in the communication complexity of \gapham{} given by \cite{ChakrabartiR12}.

Let $a=\frac{\log\frac{1}{\eps}}{2}$ and $b=\frac{\log n-1}{2}$.
Let $w_{2k}=x_{2k}$ and let $w_{2k-1}$ be a string of length $2^{2k-1}$, all consisting of zeros. 
Suppose Alice forms the concatenated string 
$S=w_{2b}\circ w_{2b-1}\circ\cdots\circ w_{2a+1}\circ w_{2a}$. 
Note that $\sum_{k=2a}^{2b} 2^k\le n$, so $S$ has length less than $n$. 
Alice then forms a data stream by the following process. 
She initializes $k=1$ and continuously increments $k$ until $k=n$. 
At each step, if $S[k]=0$ or $k$ is longer than the length of $S$, Alice inserts a $0$ into the data stream. 
Otherwise, if $S[k]=1$, then Alice inserts $k$ into the data stream. 
Meanwhile, Alice runs the $(1+\eps)$-approximation distinct elements sliding window algorithm and passes the state of the algorithm to Bob.

To find $\HAM{x_{2i},y_{2i}}$, Bob first expires $\left(\sum_{k=2i+1}^{2b} 2^k\right)-2^{2i}$ elements by inserting zeros into the data stream. 
Similar to Alice, Bob initializes $k=1$ and continuously increments $k$ until $k=2^{2i}$. 
At each step, if $y_{2i}[k]=0$ (that is, the $k$\th bit of $y_{2i}$ is zero), then Bob inserts a $0$ into the data stream. 
Otherwise, if $y_{2i}[k]=1$, then Bob inserts $k$ into the data stream. 
At the end of this procedure, the sliding window contains all zeros, nonzero values corresponding to the nonzero indices of the string 
$x_{2i}\circ w_{2i-1}\circ x_{2i-2}\circ\cdots\circ x_{2a+2}\circ w_{2a+1}\circ x_{2a}$,
and nonzero values corresponding to the nonzero indices of $y_{2i}$. 
Observe that each $w_j$ solely consists of zeros and $\sum_{k=a}^{i-1} 2^{2k}<2^{2i-1}$. 
Therefore, $\HAM{x_{2i},y_{2i}}$ is at least $(1-4\eps)2^{2i-1}$ while the number of distinct elements in the sliding window is at most $(1+4\eps)2^{2i}$ while the number of distinct elements in the suffix $x_{2i-2}\circ x_{2i-3}\cdots$ is at most $(1+\eps)2^{2i-2}$.
Thus, a $(1+\eps)$-approximation to the number of distinct elements differentiates between $\HAM{x_{2i},y_{2i}}=(1+4\eps)2^{2i-1}$ and $\HAM{x_{2i},y_{2i}}=(1-4\eps)2^{2i-1}$. 

Since the sliding window algorithm succeeds with probability $\frac{2}{3}$, then the \gapham{} distance problem succeeds with probability $\frac{2}{3}$ across the $\Omega(\log n)$ values of $i$. 
Therefore, any $(1+\eps)$-approximation sliding window algorithm for the number of distinct elements that succeeds with probability $\frac{2}{3}$ requires $\Omega\left(\frac{1}{\eps^2}\log n\right)$ space for $\eps\le\frac{1}{\sqrt{n}}$.
\end{proof}
Hence, \thmref{thm:de:lb} follows from \thmref{thm:de:lb:first} and \thmref{thm:de:lb:second}.

\subsection{$\ell_p$-Heavy Hitters}
To show a lower bound for the $\ell_p$-heavy hitters problem in the sliding window model, we consider the following variant of the \augind{} problem. 
Let $k$ and $n$ be positive integers and $\delta\in[0,1)$.  
Suppose the first player Alice is given a string $S\in[k]^n$, while the second player Bob is given an index $i\in[n]$, as well as $S[1,i-1]$. 
Alice sends a message to Bob, and Bob must output $S[i]$ with probability at least $1-\delta$.
\begin{lemma}\cite{MiltersenNSW95}
\lemlab{lem:cc:lb}
Even if Alice and Bob have access to a source of shared randomness, Alice must send a message of size $\Omega((1-\delta)n\log k)$ in a one-way communication protocol for the \augind{} problem.
\end{lemma}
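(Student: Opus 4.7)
The plan is to establish the lower bound through a standard information-theoretic argument. First, I would fix the hard distribution: let $S$ be uniform over $[k]^n$ and let the queried index $i$ be uniform over $[n]$. Write $R$ for the shared random string and $M = M(S, R)$ for Alice's message. Since $|M| \geq H(M) \geq I(M; S \mid R)$, it suffices to lower bound $I(M; S \mid R)$.

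Next, I would apply the chain rule of mutual information:
\[
I(M; S \mid R) \;=\; \sum_{i=1}^{n} I\bigl(M; S[i] \,\bigm|\, S[1, i-1], R\bigr).
\]
For each coordinate $i$, Bob has access to $M$, $i$, the prefix $S[1, i-1]$, and $R$, and outputs a guess for $S[i]$ that is correct with probability at least $1-\delta$ (by averaging, the per-coordinate success probability under the uniform distribution over $i$ is at least $1-\delta$). Fano's inequality then gives
\[
H\bigl(S[i] \,\bigm|\, M, S[1, i-1], R\bigr) \;\leq\; H(\delta) + \delta \log(k-1).
\]
Since the coordinates of $S$ are independent and uniform, $H(S[i] \mid S[1, i-1], R) = \log k$, so each summand in the chain-rule decomposition is at least $(1-\delta)\log k - H(\delta)$.

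Summing over $i \in [n]$ and observing that $H(\delta) \leq 1$ yields $|M| \geq H(M) \geq n\bigl[(1-\delta)\log k - 1\bigr] = \Omega\bigl((1-\delta)\, n \log k\bigr)$, provided $\log k = \omega(1)$; the small-$k$ regime can be absorbed by padding. The main obstacle is handling the shared randomness correctly: by Yao's minimax principle I can restrict attention to a deterministic protocol that succeeds with probability at least $1-\delta$ under the uniform product distribution on $S$, which justifies treating $R$ as fixed throughout the Fano step. A secondary delicate point is preserving the full $(1-\delta)$ factor rather than a weaker constant, but this drops out immediately from the sharp form of Fano applied above, since $\delta \log(k-1) \leq \delta \log k$ leaves the clean per-coordinate bound $(1-\delta)\log k - H(\delta)$.
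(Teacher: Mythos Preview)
The paper does not give its own proof of this lemma: it is stated with a citation to \cite{MiltersenNSW95} and used as a black box. So there is nothing in the paper to compare your argument against.

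Your argument is the standard information-theoretic proof and is correct. One small imprecision worth tightening: after fixing the randomness (via Yao/averaging), the guarantee you have is that the \emph{average} success probability over a uniform index $i$ is at least $1-\delta$, not that each coordinate individually succeeds with probability $1-\delta$. So Fano should be applied with per-coordinate errors $\delta_i$ satisfying $\sum_i \delta_i \le n\delta$, giving
\[
I(M;S) \;\ge\; \sum_i \bigl(\log k - H(\delta_i) - \delta_i\log(k-1)\bigr) \;\ge\; n(1-\delta)\log k - nH(\delta),
\]
where the last step uses $\sum_i \delta_i \le n\delta$ and the concavity of binary entropy. This is exactly the bound you claim, so the conclusion is unaffected. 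Your handling of shared randomness and the $(1-\delta)$ factor is fine; the caveat about needing $\log k$ bounded away from a constant (or padding) is also accurate and, in the paper's application, automatically satisfied since $\log k = \Theta\bigl(\eps^{-p}\log n\bigr)$.
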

We reduce the \augind{} problem to finding the $\ell_p$-heavy hitters in the sliding window model. 
To encode $S[i]$ for $S\in[k]^n$, Alice creates a data stream $a_1\circ a_2\circ\ldots\circ a_b$ with the invariant that the heavy hitters in the suffix $a_i\circ a_{i+1}\circ\ldots\circ a_b$ encodes $S[i]$. 
Thus to determine $S[i]$, Bob just needs to run the algorithm for finding heavy hitters on sliding windows and expire the elements $a_1,a_2,\ldots,a_{i-1}$ so all that remains in the sliding window is $a_i\circ a_{i+1}\circ\ldots\circ a_b$.

\begin{proofof}{\thmref{thm:lb}}
We reduce a one-way communication protocol for the \augind{} problem to finding the $\ell_p$ heavy hitters in the sliding window model. 
Let $a=\frac{1}{2^p\eps^p}\log\sqrt{n}$ and $b=\log n$. 
Suppose Alice receives $S=\left[2^a\right]^b$ and Bob receives $i\in[b]$ and $S[1,i-1]$. 
Observe that each $S[i]$ is $\frac{1}{2^p\eps^p}\log\sqrt{n}$ bits and so $S[i]$ can be rewritten as 
$S[i]=w_1\circ w_2\circ\ldots\circ w_t$,
where each $t=\frac{1}{2^p\eps^p}$ and so each $w_i$ is $\log\sqrt{n}$ bits. 

To recover $S[i]$, Alice and Bob run the following algorithm. 
First, Alice constructs data stream $A=a_1\circ a_2\circ\ldots\circ a_b$, which can be viewed as updates to an underlying frequency vector in $\mathbb{R}^n$. 
Each $a_k$ consists of $t$ updates, adding $2^{p(b-k)}$ to coordinates $v_1,v_2\ldots,v_t$ of the frequency vector, where the binary representation of each $v_j\in[n]$ is the concatenation of the binary representation of $j$ with the $\log\sqrt{n}$ bit string $w_j$. 
She then runs the sliding window heavy hitters algorithm and passes the state of the algorithm to Bob. 

Bob expires all elements of the stream before $a_i$, runs the sliding window heavy hitters algorithm on the resulting vector, and then computes the heavy hitters. 
We claim that the algorithm will output $t$ heavy hitters and by concatenating the last $\log\sqrt{n}$ bits of the binary representation of each of these heavy hitters, Bob will recover exactly $S[i]$. 
Observe that the $\ell_p$ norm of the underlying vector represented by $a_i\circ a_{i+1}\circ\ldots\circ a_b$ is exactly
$\left(\frac{1}{2^p\eps^p}(1^p+2^p+4^p+\ldots+2^{p(b-i)})\right)^{1/p}\le\frac{1}{2\eps}2^{b-i+1}=\frac{1}{\eps}2^{b-i}$.
Let $u_1,u_2\ldots,u_t$ be the coordinates of the frequency vector incremented by Alice as part of $a_i$. 
Each coordinate $u_j$ has frequency $2^{b-i}\ge\eps\left(\frac{1}{\eps}2^{b-i}\right)$, so that $u_j$ is an $\ell_p$-heavy hitter. 

Moreover, the first $\log t$ bits of $u_j$ encode $j\in[t]$ while the next $\log\sqrt{n}$ bits encode $w_j$. 
Thus, Bob identifies each heavy hitter and finds the corresponding $j\in[t]$ so that he can concatenate $S[i]=w_1\circ w_2\circ\ldots\circ w_t$.
\end{proofof}

\section*{Acknowledgements}
We would like to thank Nikita Ivkin for pointing out an error in \algref{alg:sliding:l2}.

\def\shortbib{0}
\bibliographystyle{author}
\bibliography{references}

\newcommand{\etalchar}[1]{$^{#1}$}
\begin{thebibliography}{CWYM06}

\bibitem[AMS99]{AlonMS99}
Noga Alon, Yossi Matias, and Mario Szegedy.
\newblock The space complexity of approximating the frequency moments.
\newblock {\em J. Comput. Syst. Sci.}, 58(1):137--147, 1999.
\newblock A preliminary version appeared in the Proceedings of the
  Twenty-Eighth Annual {ACM} Symposium on the Theory of Computing (STOC), 1996.

\bibitem[AM04]{ArasuM04}
Arvind Arasu and Gurmeet~Singh Manku.
\newblock Approximate counts and quantiles over sliding windows.
\newblock In {\em Proceedings of the Twenty-third {ACM} {SIGACT-SIGMOD-SIGART}
  Symposium on Principles of Database Systems}, pages 286--296, 2004.

\bibitem[BDMO03]{BabcockDMO03}
Brian {Babcock}, Mayur Datar, Rajeev Motwani, and Liadan O'Callaghan.
\newblock Maintaining variance and k-medians over data stream windows.
\newblock In {\em Proceedings of the Twenty-Second {ACM} {SIGACT-SIGMOD-SIGART}
  Symposium on Principles of Database Systems (PODS)}, pages 234--243, 2003.

\bibitem[BAE07]{BandiAA07}
Nagender Bandi, Divyakant Agrawal, and Amr {El Abbadi}.
\newblock Fast algorithms for heavy distinct hitters using associative
  memories.
\newblock In {\em 27th {IEEE} International Conference on Distributed Computing
  Systems (ICDCS)}, page~6, 2007.

\bibitem[BJKS04]{Bar-YossefJKS04}
Ziv Bar{-}Yossef, T.~S. Jayram, Ravi Kumar, and D.~Sivakumar.
\newblock An information statistics approach to data stream and communication
  complexity.
\newblock {\em J. Comput. Syst. Sci.}, 68(4):702--732, 2004.
\newblock A preliminary version appeared in the Proceedings of the 43rd
  Symposium on Foundations of Computer Science (FOCS), 2002.

\bibitem[BJK{\etalchar{+}}02]{Bar-YossefJKST02}
Ziv Bar{-}Yossef, T.~S. Jayram, Ravi Kumar, D.~Sivakumar, and Luca Trevisan.
\newblock Counting distinct elements in a data stream.
\newblock In {\em Randomization and Approximation Techniques, 6th International
  Workshop, {RANDOM}, Proceedings}, pages 1--10, 2002.

\bibitem[BEFK16]{Ben-BasatEFK16}
Ran Ben{-}Basat, Gil Einziger, Roy Friedman, and Yaron Kassner.
\newblock Heavy hitters in streams and sliding windows.
\newblock In {\em 35th Annual {IEEE} International Conference on Computer
  Communications, {INFOCOM}}, pages 1--9, 2016.

\bibitem[BICS10]{BerindeICS10}
Radu Berinde, Piotr Indyk, Graham Cormode, and Martin~J. Strauss.
\newblock Space-optimal heavy hitters with strong error bounds.
\newblock {\em {ACM} Trans. Database Syst.}, 35(4):26:1--26:28, 2010.
\newblock A preliminary version appeared in the Proceedings of the Twenty-Eigth
  {ACM} {SIGMOD-SIGACT-SIGART} Symposium on Principles of Database Systems,
  {PODS} 2009.

\bibitem[Bla18]{Blasiok18}
Jaroslaw Blasiok.
\newblock Optimal streaming and tracking distinct elements with high
  probability.
\newblock In {\em Proceedings of the Twenty-Ninth Annual {ACM-SIAM} Symposium
  on Discrete Algorithms, {SODA}}, pages 2432--2448, 2018.

\bibitem[BDN17]{BlasiokDN17}
Jaroslaw Blasiok, Jian Ding, and Jelani Nelson.
\newblock Continuous monitoring of {\(\ell\)}\({}_{\mbox{p}}\) norms in data
  streams.
\newblock In {\em Approximation, Randomization, and Combinatorial Optimization.
  Algorithms and Techniques, {APPROX/RANDOM}}, pages 32:1--32:13, 2017.

\bibitem[Bra16]{Braverman16}
Vladimir Braverman.
\newblock Sliding window algorithms, 2016.

\bibitem[BCI{\etalchar{+}}17]{BravermanCINWW17}
Vladimir Braverman, Stephen~R. Chestnut, Nikita Ivkin, Jelani Nelson, Zhengyu
  Wang, and David~P. Woodruff.
\newblock Bptree: An {\(\ell\)}\({}_{\mbox{2}}\) heavy hitters algorithm using
  constant memory.
\newblock In {\em Proceedings of the 36th {ACM} {SIGMOD-SIGACT-SIGAI} Symposium
  on Principles of Database Systems, {PODS}}, pages 361--376, 2017.

\bibitem[BCIW16]{BravermanCIW16}
Vladimir Braverman, Stephen~R. Chestnut, Nikita Ivkin, and David~P. Woodruff.
\newblock Beating countsketch for heavy hitters in insertion streams.
\newblock In {\em Proceedings of the 48th Annual {ACM} {SIGACT} Symposium on
  Theory of Computing, {STOC}}, pages 740--753, 2016.

\bibitem[BDUZ18]{BravermanDUZ18}
Vladimir Braverman, Petros Drineas, Jalaj Upadhyay, and Samson Zhou.
\newblock Numerical linear algebra in the sliding window model.
\newblock {\em CoRR}, abs/1805.03765, 2018.

\bibitem[BGO14]{BravermanGO14}
Vladimir Braverman, Ran Gelles, and Rafail Ostrovsky.
\newblock How to catch {\(\ell\)}\({}_{\mbox{2}}\)-heavy-hitters on sliding
  windows.
\newblock {\em Theor. Comput. Sci.}, 554:82--94, 2014.
\newblock A preliminary version appeared in the Proceedings of Computing and
  Combinatorics, 19th International Conference (COCOON), 2013.

\bibitem[BGL{\etalchar{+}}18]{BravermanGLWZ18}
Vladimir Braverman, Elena Grigorescu, Harry Lang, David~P. Woodruff, and Samson
  Zhou.
\newblock Nearly optimal distinct elements and heavy hitters on sliding
  windows.
\newblock {\em CoRR}, abs/1805.00212, 2018.

\bibitem[BLLM15]{BravermanLLM15}
Vladimir Braverman, Harry Lang, Keith Levin, and Morteza Monemizadeh.
\newblock Clustering on sliding windows in polylogarithmic space.
\newblock In {\em 35th {IARCS} Annual Conference on Foundation of Software
  Technology and Theoretical Computer Science, {FSTTCS}}, pages 350--364, 2015.

\bibitem[BLLM16]{BravermanLLM16}
Vladimir Braverman, Harry Lang, Keith Levin, and Morteza Monemizadeh.
\newblock Clustering problems on sliding windows.
\newblock In {\em Proceedings of the Twenty-Seventh Annual {ACM-SIAM} Symposium
  on Discrete Algorithms, {SODA}}, pages 1374--1390, 2016.

\bibitem[BO07]{BravermanO07}
Vladimir Braverman and Rafail Ostrovsky.
\newblock Smooth histograms for sliding windows.
\newblock In {\em 48th Annual {IEEE} Symposium on Foundations of Computer
  Science {(FOCS}) Proceedings}, pages 283--293, 2007.

\bibitem[BOR15]{BravermanOR15}
Vladimir Braverman, Rafail Ostrovsky, and Alan Roytman.
\newblock Zero-one laws for sliding windows and universal sketches.
\newblock In {\em Approximation, Randomization, and Combinatorial Optimization.
  Algorithms and Techniques, {APPROX/RANDOM}}, pages 573--590, 2015.

\bibitem[CFM09]{ChabchoubFM09}
Yousra Chabchoub, Christine Fricker, and Hanene Mohamed.
\newblock Analysis of a bloom filter algorithm via the supermarket model.
\newblock In {\em 21st International Teletraffic Congress, {ITC}}, pages 1--8,
  2009.

\bibitem[CCM10]{ChakrabartiCM10}
Amit Chakrabarti, Graham Cormode, and Andrew McGregor.
\newblock A near-optimal algorithm for estimating the entropy of a stream.
\newblock {\em {ACM} Trans. Algorithms}, 6(3):51:1--51:21, 2010.

\bibitem[CKS03]{ChakrabartiKS03}
Amit Chakrabarti, Subhash Khot, and Xiaodong Sun.
\newblock Near-optimal lower bounds on the multi-party communication complexity
  of set disjointness.
\newblock In {\em 18th Annual {IEEE} Conference on Computational Complexity},
  pages 107--117, 2003.

\bibitem[CR12]{ChakrabartiR12}
Amit Chakrabarti and Oded Regev.
\newblock An optimal lower bound on the communication complexity of
  gap-hamming-distance.
\newblock {\em {SIAM} J. Comput.}, 41(5):1299--1317, 2012.
\newblock A preliminary version appeared in the Proceedings of the 43rd {ACM}
  Symposium on Theory of Computing, {STOC} 2011.

\bibitem[CS06]{ChanS06}
Timothy~M. Chan and Bashir~S. Sadjad.
\newblock Geometric optimization problems over sliding windows.
\newblock {\em Int. J. Comput. Geometry Appl.}, 16(2-3):145--158, 2006.
\newblock A preliminary version appeared in the Proceedings of Algorithms and
  Computation, 15th International Symposium (ISAAC), 2004.

\bibitem[CCF04]{CharikarCF04}
Moses Charikar, Kevin~C. Chen, and Martin Farach{-}Colton.
\newblock Finding frequent items in data streams.
\newblock {\em Theor. Comput. Sci.}, 312(1):3--15, 2004.
\newblock A preliminary version appeared in the Proceedings of the Automata,
  Languages and Programming, 29th International Colloquium (ICALP), 2002.

\bibitem[CNZ16]{ChenNZ16}
Jiecao Chen, Huy~L. Nguyen, and Qin Zhang.
\newblock Submodular maximization over sliding windows.
\newblock {\em CoRR}, abs/1611.00129, 2016.

\bibitem[CWYM06]{ChiWYM06}
Yun Chi, Haixun Wang, Philip~S. Yu, and Richard~R. Muntz.
\newblock Catch the moment: maintaining closed frequent itemsets over a data
  stream sliding window.
\newblock {\em Knowl. Inf. Syst.}, 10(3):265--294, 2006.
\newblock A preliminary version appeared in the Proceedings of the 4th {IEEE}
  International Conference on Data Mining (ICDM), 2004.

\bibitem[Cor13]{C13}
Graham Cormode.
\newblock The continuous distributed monitoring model.
\newblock {\em SIGMOD Record}, 42(1):5--14, 2013.

\bibitem[CG08]{CG08}
Graham Cormode and Minos~N. Garofalakis.
\newblock Streaming in a connected world: querying and tracking distributed
  data streams.
\newblock In {\em EDBT}, page 745, 2008.

\bibitem[CKMS08]{CormodeKMS08}
Graham Cormode, Flip Korn, S.~Muthukrishnan, and Divesh Srivastava.
\newblock Finding hierarchical heavy hitters in streaming data.
\newblock {\em {TKDD}}, 1(4):2:1--2:48, 2008.

\bibitem[CM05a]{CormodeM05}
Graham Cormode and S.~Muthukrishnan.
\newblock An improved data stream summary: the count-min sketch and its
  applications.
\newblock {\em J. Algorithms}, 55(1):58--75, 2005.
\newblock A preliminary version appeared in the Proceedings of the 6th Latin
  American Symposium (LATIN), 2004.

\bibitem[CM05b]{CM05x}
Graham Cormode and S.~Muthukrishnan.
\newblock What's new: finding significant differences in network data streams.
\newblock {\em IEEE/ACM Transactions on Networking}, 13(6):1219--1232, 2005.

\bibitem[CMS13]{CrouchMS13}
Michael~S. Crouch, Andrew McGregor, and Daniel Stubbs.
\newblock Dynamic graphs in the sliding-window model.
\newblock In {\em Algorithms - {ESA} 2013 - 21st Annual European Symposium,
  Proceedings}, pages 337--348, 2013.

\bibitem[DGIM02]{DatarGIM02}
Mayur Datar, Aristides Gionis, Piotr Indyk, and Rajeev Motwani.
\newblock Maintaining stream statistics over sliding windows.
\newblock {\em {SIAM} J. Comput.}, 31(6):1794--1813, 2002.
\newblock A preliminary version appeared in the Proceedings of the Thirteenth
  Annual {ACM-SIAM} Symposium on Discrete Algorithms (SODA), 2002.

\bibitem[DM02]{DatarM02}
Mayur Datar and S.~Muthukrishnan.
\newblock Estimating rarity and similarity over data stream windows.
\newblock In {\em Algorithms - {ESA} 2002, 10th Annual European Symposium,
  Proceedings}, pages 323--334, 2002.

\bibitem[DLM02]{DemaineLM02}
Erik~D. Demaine, Alejandro L{\'{o}}pez{-}Ortiz, and J.~Ian Munro.
\newblock Frequency estimation of internet packet streams with limited space.
\newblock In {\em Algorithms - {ESA}, 10th Annual European Symposium,
  Proceedings}, pages 348--360, 2002.

\bibitem[DF03]{DurandF03}
Marianne Durand and Philippe Flajolet.
\newblock Loglog counting of large cardinalities (extended abstract).
\newblock In {\em Algorithms - {ESA}, 11th Annual European Symposium,
  Proceedings}, pages 605--617, 2003.

\bibitem[ELVZ17]{EpastoLVZ17}
Alessandro Epasto, Silvio Lattanzi, Sergei Vassilvitskii, and Morteza
  Zadimoghaddam.
\newblock Submodular optimization over sliding windows.
\newblock In {\em Proceedings of the 26th International Conference on World
  Wide Web, {WWW}}, pages 421--430, 2017.

\bibitem[EV03]{EstanV03}
Cristian Estan and George Varghese.
\newblock New directions in traffic measurement and accounting: Focusing on the
  elephants, ignoring the mice.
\newblock {\em {ACM} Trans. Comput. Syst.}, 21(3):270--313, 2003.

\bibitem[FSG{\etalchar{+}}98]{FangSGMU98}
Min Fang, Narayanan Shivakumar, Hector Garcia{-}Molina, Rajeev Motwani, and
  Jeffrey~D. Ullman.
\newblock Computing iceberg queries efficiently.
\newblock In {\em VLDB'98, Proceedings of 24rd International Conference on Very
  Large Data Bases}, pages 299--310, 1998.

\bibitem[FKZ05]{FeigenbaumKZ04}
Joan Feigenbaum, Sampath Kannan, and Jian Zhang.
\newblock Computing diameter in the streaming and sliding-window models.
\newblock {\em Algorithmica}, 41(1):25--41, 2005.

\bibitem[FFGM07]{FlajoletFGM07}
Philippe Flajolet, Eric Fusy, Olivier Gandouet, and Frederic Meunier.
\newblock Hyperloglog: the analysis of a near-optimal cardinality estimation
  algorithm.
\newblock In {\em AofA: Analysis of Algorithms}, page 137–156, 2007.

\bibitem[FM83]{FlajoletM83}
Philippe Flajolet and G.~Nigel Martin.
\newblock Probabilistic counting.
\newblock In {\em 24th Annual Symposium on Foundations of Computer Science},
  pages 76--82, 1983.

\bibitem[GT01]{GibbonsT01}
Phillip~B. Gibbons and Srikanta Tirthapura.
\newblock Estimating simple functions on the union of data streams.
\newblock In {\em {SPAA}}, pages 281--291, 2001.

\bibitem[GT02]{GibbonsT02}
Phillip~B. Gibbons and Srikanta Tirthapura.
\newblock Distributed streams algorithms for sliding windows.
\newblock In {\em {SPAA}}, pages 63--72, 2002.

\bibitem[GR09]{GopalanR09}
Parikshit Gopalan and Jaikumar Radhakrishnan.
\newblock Finding duplicates in a data stream.
\newblock In {\em Proceedings of the Twentieth Annual {ACM-SIAM} Symposium on
  Discrete Algorithms, {SODA}}, pages 402--411, 2009.

\bibitem[HNO08]{HarveyNO08}
Nicholas J.~A. Harvey, Jelani Nelson, and Krzysztof Onak.
\newblock Sketching and streaming entropy via approximation theory.
\newblock In {\em 49th Annual {IEEE} Symposium on Foundations of Computer
  Science, {FOCS}}, pages 489--498, 2008.

\bibitem[HT08]{HungT08}
Regant Y.~S. Hung and Hing{-}Fung Ting.
\newblock Finding heavy hitters over the sliding window of a weighted data
  stream.
\newblock In {\em {LATIN}: Theoretical Informatics, 8th Latin American
  Symposium, Proceedings}, pages 699--710, 2008.

\bibitem[IW03]{IndykW03}
Piotr Indyk and David~P. Woodruff.
\newblock Tight lower bounds for the distinct elements problem.
\newblock In {\em 44th Symposium on Foundations of Computer Science (FOCS)},
  pages 283--288, 2003.

\bibitem[IW05]{IndykW05}
Piotr Indyk and David~P. Woodruff.
\newblock Optimal approximations of the frequency moments of data streams.
\newblock In {\em Proceedings of the 37th Annual {ACM} Symposium on Theory of
  Computing (STOC)}, pages 202--208, 2005.

\bibitem[JST11]{JowhariST11}
Hossein Jowhari, Mert Saglam, and G{\'{a}}bor Tardos.
\newblock Tight bounds for lp samplers, finding duplicates in streams, and
  related problems.
\newblock In {\em Proceedings of the 30th {ACM} {SIGMOD-SIGACT-SIGART}
  Symposium on Principles of Database Systems}, pages 49--58, 2011.

\bibitem[KNW10]{KaneNW10}
Daniel~M. Kane, Jelani Nelson, and David~P. Woodruff.
\newblock An optimal algorithm for the distinct elements problem.
\newblock In {\em Proceedings of the Twenty-Ninth {ACM} {SIGMOD-SIGACT-SIGART}
  Symposium on Principles of Database Systems, {PODS}}, pages 41--52, 2010.

\bibitem[KX06]{KumarX06}
Abhishek Kumar and Jun~(Jim) Xu.
\newblock Sketch guided sampling - using on-line estimates of flow size for
  adaptive data collection.
\newblock In {\em {INFOCOM} 2006. 25th {IEEE} International Conference on
  Computer Communications, Joint Conference of the {IEEE} Computer and
  Communications Societies}, 2006.

\bibitem[LNNT16]{LarsenNNT16}
Kasper~Green Larsen, Jelani Nelson, Huy~L. Nguyen, and Mikkel Thorup.
\newblock Heavy hitters via cluster-preserving clustering.
\newblock In {\em {IEEE} 57th Annual Symposium on Foundations of Computer
  Science, {FOCS}}, pages 61--70, 2016.

\bibitem[LT06]{LeeT06}
Lap{-}Kei Lee and H.~F. Ting.
\newblock A simpler and more efficient deterministic scheme for finding
  frequent items over sliding windows.
\newblock In {\em Proceedings of the Twenty-Fifth {ACM} {SIGACT-SIGMOD-SIGART}
  Symposium on Principles of Database Systems}, pages 290--297, 2006.

\bibitem[MM12]{MankuM12}
Gurmeet~Singh Manku and Rajeev Motwani.
\newblock Approximate frequency counts over data streams.
\newblock {\em {PVLDB}}, 5(12):1699, 2012.
\newblock A preliminary version appeared in the Proceedings of the 28th
  International Conference on Very Large Data Bases (VLDB), 2002.

\bibitem[MNSW95]{MiltersenNSW95}
Peter~Bro Miltersen, Noam Nisan, Shmuel Safra, and Avi Wigderson.
\newblock On data structures and asymmetric communication complexity.
\newblock In {\em Proceedings of the Twenty-Seventh Annual {ACM} Symposium on
  Theory of Computing}, pages 103--111, 1995.

\bibitem[MG82]{MisraG82}
Jayadev Misra and David Gries.
\newblock Finding repeated elements.
\newblock {\em Sci. Comput. Program.}, 2(2):143--152, 1982.

\bibitem[MW10]{MonemizadehW10}
Morteza Monemizadeh and David~P. Woodruff.
\newblock 1-pass relative-error {\(\ell\)}\({}_{\mbox{p}}\)-sampling with
  applications.
\newblock In {\em Proceedings of the Twenty-First Annual {ACM-SIAM} Symposium
  on Discrete Algorithms, {SODA}}, pages 1143--1160, 2010.

\bibitem[OMM{\etalchar{+}}14]{OsborneEtAl2014}
Miles Osborne, Sean Moran, Richard McCreadie, Alexander~Von Lunen, Martin
  Sykora, Elizabeth Cano, Neil Ireson, Craig MacDonald, Iadh Ounis, Yulan He,
  Tom Jackson, Fabio Ciravegna, and Ann O'Brien.
\newblock Real-time detection, tracking and monitoring of automatically
  discovered events in social media.
\newblock In {\em Proceedings of the 52nd Annual Meeting of the Association for
  Computational Linguistics}, 2014.

\bibitem[SW04]{SenW04}
Subhabrata Sen and Jia Wang.
\newblock Analyzing peer-to-peer traffic across large networks.
\newblock {\em {IEEE/ACM} Trans. Netw.}, 12(2):219--232, 2004.

\bibitem[TZ12]{ThorupZ12}
Mikkel Thorup and Yin Zhang.
\newblock Tabulation-based 5-independent hashing with applications to linear
  probing and second moment estimation.
\newblock {\em {SIAM} J. Comput.}, 41(2):293--331, 2012.

\end{thebibliography}
\end{document}
